\documentclass[11pt, fleqn, a4paper]{scrartcl}
\usepackage{ifthen}
\newboolean{IsProc}\setboolean{IsProc}{false}
\usepackage[english]{babel}
\usepackage[utf8]{inputenc}
\usepackage[T1]{fontenc}
\usepackage{amssymb}
\usepackage{algorithm}
\usepackage[noend]{algpseudocode}
\renewcommand{\algorithmicrequire}{\textbf{Input:}}

\algnewcommand{\LineComment}[1]{\hspace{-1.6em}\(\triangleright\) #1}
\usepackage{color}
\ifthenelse{\boolean{IsProc}}{
\usepackage[lmargin=1.0in,rmargin=1.0in,bottom=1.0in,top=1.0in,
twoside=False]{geometry}}{
\usepackage[lmargin=1.1in,rmargin=1.1in,bottom=1.3in,top=1.3in,
twoside=False]{geometry}}
\usepackage{enumerate}
\usepackage{relsize,xspace}
 \usepackage{xcolor}
 \usepackage{mathtools}
\usepackage[normalem]{ulem}
\usepackage{microtype}
\usepackage{amsmath}
\usepackage{amssymb}
\usepackage{amsfonts}
\usepackage{stmaryrd}
\usepackage{times}
\usepackage{tikz}
\usepackage{bm}
\usepackage{tikz}
\tikzstyle{vertex}=[circle,inner sep=2.5,minimum size
=2mm,semithick,fill=black!20, draw=black]
\usetikzlibrary{decorations.pathreplacing}
\tikzstyle{brace}=[thin,decorate,decoration=brace]

\definecolor{blue}{rgb}{0.1,0.2,0.5}
\definecolor{brown}{rgb}{0.6,0.6,0.2}
\usepackage[ocgcolorlinks, linkcolor={blue}, citecolor={brown}]{hyperref}

\usepackage[amsmath,thmmarks,hyperref]{ntheorem}
\usepackage{cleveref}

\crefformat{page}{#2page~#1#3}%
\Crefformat{page}{#2Page~#1#3}%
\crefformat{equation}{#2(#1)#3}%
\Crefformat{equation}{#2(#1)#3}%
\crefformat{figure}{#2Figure~#1#3}%
\Crefformat{figure}{#2Figure~#1#3}%
\crefformat{section}{#2Section~#1#3}
\Crefformat{section}{#2Section~#1#3}
\crefformat{chapter}{#2Chapter~#1#3}
\Crefformat{chapter}{#2Chapter~#1#3}
\crefformat{chapter*}{#2Chapter~#1#3}
\Crefformat{chapter*}{#2Chapter~#1#3}
\crefformat{part}{#2Part~#1#3}
\Crefformat{part}{#2Part~#1#3}
\crefformat{enumi}{#2(#1)#3}
\Crefformat{enumi}{#2(#1)#3}
\crefformat{algorithm}{#2algorithm~#1#3}%
\Crefformat{algorithm}{#2Algorithm~#1#3}%

\usepackage{latexsym}


\theoremnumbering{arabic}

\theoremstyle{plain}
\theoremsymbol{}
\theorembodyfont{\itshape}
\theoremheaderfont{\normalfont\bfseries}
\theoremseparator{}

\theoremseparator{.}
\newtheorem{theorem}{Theorem}
\crefformat{theorem}{#2Theorem~#1#3}
\Crefformat{theorem}{#2Theorem~#1#3}

\newcommand{\newtheoremwithcrefformat}[2]{%
  \newtheorem{#1}[theorem]{#2}%
  \crefformat{#1}{##2\MakeUppercase#1~##1##3}%
  \Crefformat{#1}{##2\MakeUppercase#1~##1##3}%
}

\newtheoremwithcrefformat{proposition}{Proposition}
\newtheoremwithcrefformat{observation}{Observation}
\newtheoremwithcrefformat{lemma}{Lemma}
\newtheoremwithcrefformat{conjecture}{Conjecture}
\newtheoremwithcrefformat{corollary}{Corollary}
\theorembodyfont{\upshape}
\newtheoremwithcrefformat{example}{Example}
\newtheoremwithcrefformat{remark}{Remark}

\theoremsymbol{\ensuremath{□}}
\newtheoremwithcrefformat{definition}{Definition}

\theoremstyle{nonumberplain}
\theoremseparator{}
\theoremheaderfont{\scshape}
\theorembodyfont{\normalfont}
\theoremsymbol{\ensuremath{\square}}
\newtheorem{proof}{Proof.}


\DeclarePairedDelimiter{\abs}{\lvert}{\rvert}

\newcommand{\wcol}{\mathrm{wcol}}

\newcommand{\Wreach}{\mathrm{WReach}}

\newcommand{\Oof}{\mathcal{O}}
\newcommand{\CCC}{\mathcal{C}}
\newcommand{\XXX}{\mathcal{X}}
\newcommand{\minor}{\preccurlyeq}
\newcommand{\N}{\mathbb{N}}
\newcommand{\dist}{\mathrm{dist}}
\newcommand{\rad}{\mathrm{rad}}

\renewcommand{\mid}{~:~}
\renewcommand{\epsilon}{\varepsilon}

\newcommand{\congestbc}{\ensuremath{\mathcal{CONGEST}_{\hspace{-2pt}\textsc{BC}}}\xspace}
\newcommand{\congest}{\ensuremath{\mathcal{CONGEST}}\xspace}
\newcommand{\local}{\ensuremath{\mathcal{LOCAL}}\xspace}
\usepackage{authblk}
\newcommand*\samethanks[1][\value{footnote}]{\footnotemark[#1]}

\title{Distributed Domination on Graph Classes of Bounded Expansion}
\date{\vspace{-5ex}}
\ifthenelse{\boolean{IsProc}}{\subtitle{Regular Submission}}{}
\author[1]{Saeed Akhoondian Amiri%
\footnote{S.A. Amiri, R. Rabinovich and, partially, S.
  Siebertz's research was supported by the
European Research Council (ERC) under the European Union's Horizon
2020 research and innovation programme (ERC consolidator grant DISTRUCT,
agreement No 648527).}%
}
\author[2]{Patrice Ossona de Mendez
\thanks{P. Ossona de Mendez's research was supported by
grant ERCCZ LL-1201 and by the European Associated Laboratory ``Structures in
Combinatorics'' (LEA STRUCO), and partially supported by ANR project Stint under
reference ANR-13-BS02-0007}%
}
\author[1]{Roman~Rabinovich\samethanks[1]}%
\author[3]{Sebastian Siebertz
\thanks{Contact author. Address: Institute of Informatics,
Faculty of Mathematics, Informatics, and Mechanics of the University of Warsaw
ul. Banacha 2, 02-097 Warsaw, Poland. Telephone number: +48 22 5544458. \\The work of S. Siebertz is partially supported by the National
  Science Centre of Poland via POLONEZ grant agreement UMO-2015/19/P/ST6/03998, 
which has received funding from the European Union's Horizon 2020 research and 
innovation programme (Marie Sk\l odowska-Curie grant agreement No.\ 665778).
}%
}
\affil[ ]{\normalsize\texttt{saeed.amiri@tu-berlin.de, pom@ehess.fr,
    roman.rabinovich@tu-berlin.de, siebertz@mimuw.edu.pl}}
\affil[ ]{}
\affil[1]{Technische Universität Berlin, Germany}
\affil[2]{CAMS (CNRS UMR 8557)\\
  EHESS, Paris, France}
\affil[3]{Institute of Informatics, University of Warsaw, Poland}

\begin{document}

\maketitle

\begin{picture}(0,0) \put(385,-410)
{\hbox{\includegraphics[scale=0.25]{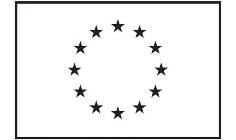}}} \end{picture} 
\vspace{-0.8cm}
\begin{abstract}
\begin{abstract}
  \noindent We provide a new constant factor approximation algorithm
  for the (connected) \mbox{distance-$r$} dominating set problem on graph classes of
  bounded expansion. Classes of bounded expansion include many
  familiar classes of sparse graphs such as planar graphs and graphs
  with excluded (topological) minors, and notably, these classes 
  form the most general subgraph closed classes of graphs for
  which a sequential constant factor approximation algorithm for the
  \mbox{distance-$r$} dominating set problem is currently known. Our 
  algorithm can be implemented in the \congestbc model of 
  distributed computing and uses $\Oof(r^2 \log n)$ communication 
  rounds. 

  Our techniques, which may be of independent interest, are based on
  a distributed computation of sparse neighborhood covers of small
  radius on bounded expansion classes. We show how to compute 
  an $r$-neighborhood cover of radius~$2r$ and
  overlap $f(r)$ on every class of bounded expansion in $\Oof(r^2\log n)$
  communication rounds  for some function~$f$.

  Finally, we show how to use the greater power of the \local model 
  to turn any distance-$r$ dominating set 
  into a constantly larger connected distance-$r$ dominating
  set in $3r+1$ rounds on any class of bounded expansion. 
  Combining this algorithm,
  e.g., with the constant factor approximation algorithm for dominating 
  sets on planar graphs of Lenzen et al.\ 
  gives a constant factor approximation algorithm for connected 
  dominating sets on planar graphs in a constant number of rounds in 
  the \local model, where the approximation ratio is only $6$ times larger
  than that of Lenzen et al.'s algorithm. 
\end{abstract}
	
\end{abstract}

\section{Introduction and contributions}

The \textsc{Dominating Set} and \textsc{Connected Dominating Set} problems are two of the most well-studied problems in algorithms and combinatorics \cite{haynes1998fundamentals}.
Recall that 
a subset $D$ of vertices of a graph~$G$ is a {\em dominating set}  of $G$ if every vertex of $G$ is either in $D$ or adjacent to a vertex in $D$, and that a dominating set is a {\em connected dominating set} if it induces a connected subgraph.

The \textsc{Dominating Set}  problem, which aims at 
finding a minimum size dominating set in a graph 
is \textsc{NP}-complete in general~\cite{karp1972reducibility}, 
and even so on planar graphs of maximum degree $3$ (cf.\ [GT2] in~\cite{michael1979computers}). 
The simple greedy algorithm---which 
 at each step  adds a vertex dominating the largest number of non-dominated vertices---achieves an approximation ratio
\footnote{Note that these results are for the \textsc{Set Cover} problem, 
which however reduces to the \textsc{Dominating Set} problem
by an approximation preserving reduction and,
in fact, the two problems achieve exactly the same approximation 
ratio~\cite{kann1992approximability}.}  of 
$\ln n -\ln \ln n + \Theta(1)$ on graphs of order $n$~\cite{chvatal1979greedy,
lovasz1975ratio}, and
 no better approximation ratio can be achieved in general under
standard complexity theoretic assumptions~\cite{alon2006algorithmic,
arora2003improved, bellare1993efficient, dinur2014analytical, 
feige1998threshold, lund1994hardness, raz1997sub}.

Recall that a class $\mathcal C$ (closed under taking subgraphs) has {\em bounded expansion} if, for every integer~$r$, the average degree of all graphs having their $r$-subdivision (that is the graph obtained by replacing each edge by a path of length $r+1$)  in $\mathcal C$ is bounded by some constant $C(r)$ \cite{nevsetvril2008grad, nevsetvril2008gradb,
nevsetvril2008gradc,NesetrilOdM12}.
Not only do many familiar classes of sparse graphs, such as planar graphs and graphs with excluded (topological) minors  have bounded expansion, but so do many geometrically defined graphs \cite{BEEx, har2015approximation},  and 
experimental evidence that real world complex networks  have bounded expansion is given in~\cite{demaine2014structural}. 
Also, widely used random models of sparse graphs, like the Configuration Model \cite{molloy1995critical} and the Chung--Lu Model \cite{chung2002connected} with specified asymptotic degree sequences 
generate graphs that asymptotically almost surely belong to a bounded expansion class determined by the parameters of the model~\cite{demaine2014structural}.

In \cite{dvovrak13} it is proved that in linear time one can 
compute a constant factor approximation on classes with bounded expansion for a generalization of the \textsc{Dominating Set}, the \textsc{Distance-$r$ Dominating Set} problem, which consists in finding in an input graph $G$ a minimum size subset $D$ of vertices, such that every vertex of $G$ is at distance at most $r$ from a vertex in $D$. 
This problem (also known 
as the \textsc{$(k,r)$-center}
problem)
 has been extensively studied in the literature.


\ifthenelse{\boolean{IsProc}}%
{\bigskip}%
{\bigskip}
\noindent \textbf{Contribution 1.} We present a new approximation algorithm for 
the \textsc{Distance-$r$ Dominating Set} problem on classes of 
bounded expansion which improves the approximation ratio achieved 
by the algorithm of~\cite{dvovrak13}. Our algorithm can be
implemented in linear time on any class of bounded expansion. A key
feature of our algorithm is that it is tailored to be executed
in a distributed setting.

\ifthenelse{\boolean{IsProc}}%
{\medskip}%
{\bigskip}
There has been lots of effort to approximate the \textsc{Dominating Set} problem with
distributed algorithms, however, similar hardness results
also apply to distributed algorithms. 
It was shown in~\cite{kuhn2016local} 
that in~$t$ communication rounds the 
\textsc{Dominating Set} problem on $n$-vertex graphs of maximum degree 
$\Delta$  can only be approximated within factor~$\Omega(n^{c/t^2})$ and~$\Omega(\Delta^{c'/t})$, where~$c$ and~$c'$ are constants. 
This implies that, in general, to achieve a constant approximation ratio, 
every distributed algorithm requires at least~$\Omega(\sqrt{\log n})$ and~$\Omega(\log \Delta)$ communication rounds. 
Kuhn et al.~\cite{kuhn2016local} also provides an approximation algorithm on general graphs, which achieves a~$(1+\epsilon)\ln \Delta$-approximation in~$\Oof(\log(n)/\epsilon)$ rounds for any~$\epsilon>0$. 
Ghaffari et al.~\cite{GhaffariKM17} provide a polylog-time distributed approximation scheme for covering and packing integer linear programs.
In particular, based on their techniques one can compute a $(1+\epsilon)$-approximation for dominating sets and distance-$r$ dominating sets
in $\Oof(\mathit{poly}(\log n/\epsilon))$ rounds in the \local model. 
Observe however that this result requires learning 
polylog-neighborhoods and solving the dominating set problem optimally
on these neighborhoods. In particular, the algorithm cannot be 
carried out in the \congest model (in this number of steps).

For graphs of arboricity~$a$ there exists a
forest decomposition algorithm achieving a factor~$\Oof(a^2)$-approximation
in randomized time~$\Oof(\log n)$, and a deterministic~$\Oof(a \log
\Delta)$ approximation algorithm
requiring~$\Oof(\log \Delta)$ rounds~\cite{lenzen2010minimum}. 
Given any~$\delta>0$,~$(1+\delta)$-approximations of a maximum independent
set, of a maximum matching, and of a minimum dominating set can be computed in
$\Oof(\log^* n)$ rounds in planar graphs~\cite{czygrinow2008fast}, which 
is asymptotically optimal~\cite{lenzen2008leveraging}. 
A constant factor
approximation on planar graphs~\cite{lenzen2013distributed,wawrzyniak2014strengthened} 
and on graphs of bounded genus~\cite{AmiriSS16} can be computed locally in a
constant number of communication rounds. 
In terms of lower bounds, it was shown that there is no 
deterministic local algorithm (constant-time distributed graph algorithm) that 
finds a~$(7-\epsilon)$-approximation of a minimum dominating set on planar graphs, 
for any positive constant~$\epsilon$~\cite{hilke2014brief}. 

Observe that
the above algorithms for restricted graph classes cannot be directly employed to obtain good 
approximations for the \textsc{Distance-$r$ Dominating Set} problem, 
as all structural information which is used in the algorithms may be 
lost when building the $r$-transitive closure of the graph. The distributed 
algorithms of~\cite{kutten1995fast,penso2004distributed} find distance-$r$
dominating sets of size $\Oof(n/r)$ in time $\Oof(r\cdot \log^*n)$, without 
any relation to the size of an optimal distance-$r$ dominating set. In very
restrictive settings, e.g., in trees~\cite{turaudistributed} or in star-split 
graphs~\cite{wang2003distributed} better solutions are known.

%
The \textsc{Distance-$r$ Dominating Set} problem is closely related
to the problem of covering local neighborhoods in a graph by connected 
clusters of small radius. 
An 
$r$-neigh\-bor\-hood cover~\cite{awerbuch1990sparse} is a set $\XXX$ 
of vertex sets $X\subseteq V(G)$ such that for each vertex $v\in V(G)$ 
there is a set $X\in \XXX$ with $N_r[v]\subseteq X$. We are
interested in covers of small \emph{radius}, that is, $\rad(G[X])$ shall
be small for all $X\in \XXX$ and small \emph{degree}, that is, 
every vertex $v\in V(G)$ shall lie in only a few clusters. 

Sparse covers have many applications such as distance coordinates,
routing with succinct routing tables~\cite{abraham2005compact,awerbuch1990sparse}, mobile user tracking~\cite{awerbuch1990sparse}, 
resource allocation~\cite{awerbuch1990locality}, synchronisation
in distributed algorithms~\cite{awerbuch1990network}, and many more.
Every graph admits an $r$-neighborhood cover of radius at most $2r-1$ 
and degree at most $2r\cdot n^{1/r}$~\cite{awerbuch1990sparse} and
asymptotically, this cannot be improved~\cite{thorup2005approximate}. 
Better covers are known to exist, e.g., for planar graphs~\cite{busch2007improved} and for classes
that exclude a minor~\cite{abraham2007strong}. In particular, the construction 
of~\cite{abraham14} provides $r$-neighborhood covers of radius $\Oof(t^2r)$ and degree $2^{\Oof(t)}\cdot t!$
for graphs
that exclude $K_t$ as a minor. It follows from a construction in~\cite{grohe2014deciding} that classes of bounded expansion admit 
$r$-neighborhood covers of radius at most $2r$ and degree at most
$f(r)$ for some function $f$.

\ifthenelse{\boolean{IsProc}}%
{\bigskip}%
{\medskip}
\noindent\textbf{Contribution 2.} We show that the algorithm
of~\cite{grohe2014deciding} for constructing sparse
$r$-neighborhood covers on classes of bounded expansion 
can be implemented in the 
\congestbc model of distributed computing in $\Oof(r^2\log n)$ 
communication rounds. Based on this construction, we
show that our newly proposed algorithm for the \textsc{Distance-$r$ Dominating
Set} problem can be implemented in the \congestbc model in 
$\Oof(r^2\log n)$ communication rounds on any class of graphs of bounded expansion.
Our result is based on a routing scheme presented by Ne\v{s}et\v{r}il
and Ossona de Mendez in~\cite{nevsetvrildistributed}, which in turn is
based on an iterative application of an algorithm of Barenboim and 
Elkin~\cite{barenboim2010sublogarithmic}. 

\ifthenelse{\boolean{IsProc}}%
{\medskip}%
{\bigskip}
While in a sequential setting one can trivially connect the vertices of 
a (distance-$r$) dominating set along a spanning tree to obtain
a connected (distance-$r$) dominating set of small size, creating 
such connections is a non-trivial task in the distributed setting. 
Several algorithms were proposed to compute connected dominating
sets in general graphs~\cite{das1997routing,das1997routingb,kuhn2016local,sivakumar1998improved,
stojmenovic2002dominating,wu1999calculating}. We also refer to 
these papers for applications of connected dominating
sets for distributed computing and routing. All lower bounds for the 
\textsc{Dominating Set} problem hold all the more so for the
\textsc{Connected Dominating Set} problem. In particular, none
of the above algorithms computes a constant factor approximation
of a minimum connected dominating set
in a sub-linear number of communication rounds. 
To our knowledge, there is no distributed 
algorithm to compute a constant factor approximation to the \textsc{Connected (Distance-$r$) Dominating Set} problem on restricted graph classes. 

\ifthenelse{\boolean{IsProc}}%
{\medskip}%
{\bigskip}
\noindent\textbf{Contribution 3.} 
We show how to extend our algorithm for the \textsc{Distance-$r$ 
Dominating Set} problem to compute a constant factor approximation for
the \textsc{Connected Distance-$r$ Dominating Set} problem. We hence prove
that there exists a constant
factor approximation algorithm for the \textsc{Connected Distance-$r$ Dominating Set} 
problem which works in the \congestbc model in 
$\Oof(r^2\log n)$ communication rounds on any class of graphs of bounded expansion.

Finally, we show how to use the greater power of the \local model to 
turn any distance-$r$ 
dominating set $D$ into a connected distance-$r$
dominating set of size at most $c(r)\cdot |D|$, for some small constant 
$c(r)$ depending only on $r$ and the class under consideration. This
new algorithm can be implemented in $3r+1$ communication rounds
in the \local model. In combination with the algorithm 
of Lenzen et al.~\cite{lenzen2013distributed} we obtain a constant factor
approximation algorithm for the \textsc{Connected Dominating
Set} problem on planar graphs in a constant number of
communication rounds in the \local model (the constant 
$c(1)$ which we need here is $6$). A similar result follows
for graphs of bounded genus by combining our new algorithm with an algorithm 
of~\cite{AmiriSS16}.





\section{Preliminaries}\label{sec:prelim}

\subparagraph*{Graphs.} In this paper, we consider \emph{finite, undirected simple graphs}. 
For a graph $G$, we write $V(G)$ for the \emph{vertex set} of $G$ and $E(G)$ 
for its \emph{edge set}. A \emph{path of length $\ell$} in $G$ is a subgraph 
$P\subseteq G$ with vertex set $V(P)=\{v_1,\ldots, v_{\ell+1}\}$ and edge set 
$E(P)=\{\{v_i,v_{i+1}\} \mid  1\leq i<\ell\}$. The path~$P$ \emph{connects its 
endpoints} $v_1$ and $v_{\ell+1}$. The \emph{distance} between two vertices 
$u,v\in V(G)$, denoted $\dist(u,v)$, is the minimum length of a path that connects 
$u$ and $v$ or $\infty$ if no such path exists. For $v\in V(G)$, we write $N_r[v]$ 
for the \emph{closed $r$-neighborhood of $v$}, that is $N_r[v]=\{u\in V(G) \mid  \dist(u,v)\leq r\}$.
Note that we allow paths of length $0$, so $N_r[v]$ always contains $v$ itself. 
For a set $A\subseteq V(G)$, we write $N_r[A]$ for $\bigcup_{v\in A}N_r[v]$. 
The radius of a connected graph $G$ is the minimum number $\rad(G)$ such that there is a 
vertex $v\in V(G)$ with $N_{\rad(G)}[v]=V(G)$. 


The \emph{arboricity} of a graph is the minimum number of spanning
forests that partition its edge set. The arboricity of a graph is
within factor $2$ of its degeneracy.
For a set $X\subseteq V(G)$ we write $G[X]$ for the subgraph of $G$ \emph{induced} 
by $X$. For $k\in \N$, $G$ is \emph{$k$-degenerate} if for each $X\subseteq V(G)$ the graph $G[X]$ contains a vertex of
degree at most~$k$. If an $n$-vertex graph $G$ is $k$-degenerate, then $G$ contains at most $k\cdot n$ edges.

We assume that all graphs are represented by adjacency lists\ifthenelse{\boolean{IsProc}}{}{ so that 
the total size of a graph representation is linear in the number 
of edges and vertices}.\ifthenelse{\boolean{IsProc}}{}{

} If~$G$ is $k$-degenerate, then in linear time we can order the vertices as
$v_1,\ldots, v_n$ such that every vertex $v_i$ has at most~$k$ smaller
neighbours $v_{j_1},\ldots, v_{j_k}$, $j_\ell < i$ for
$\ell\in \{1,\ldots, k\}$. In the same time complexity we can order all
adjacency lists consistently with the order. For the sequential model in
\Cref{sec:approx} we assume that the identifiers of the vertices occupy
constant space while in the distributed model we assume $\log n$-bit
identifiers.

\subparagraph*{Distance\hspace{2pt}-\hspace{1pt}\textit{r} dominating sets.} For an integer $r$, a
\emph{distance\hspace{1pt}-\hspace{1pt}$r$ dominating set} in a graph $G$ is a 
set $M\subseteq V(G)$ such that $N_r[M]=V(G)$. A distance-$1$ dominating set is simply called a 
\emph{dominating set}. 

\subparagraph*{Distributed system model.} The clients of a network are modelled as the vertices 
$V(G)$ of a graph~$G$, its communication links are represented by the edges 
$E(G)$ of the graph. Each client has a unique identifier \textit{(id)} of size
$\log n$ where $n\coloneqq |V(G)|$ is the order of the graph known to every vertex. 
Communication 
is synchronous and reliable. In each round, each vertex $v\in V(G)$ may send a 
(different) message to each of its neighbors $w\in N_1[v]$ (the vertex specifies 
which message is sent to which neighbor) and receives all messages from its 
neighbors. In the \local model, messages may have arbitrary size, 
in the \congest model, messages may have size $\Oof(\log n)$. 
In the \congestbc model,
every vertex may only broadcast the same message of size $\Oof(\log n)$ to all its
neighbors. 
After sending and receiving messages, every client may perform arbitrary finite 
computations. 
The complexity of a distributed algorithm is its number of
communication rounds. 
The network graph also represents the graph problem that we are trying to solve, 
e.g., the \textsc{\mbox{Distance-$r$} Dominating Set} instance. At termination, each vertex must output whether it is part of the 
\mbox{distance-$r$} dominating set or not, and these outputs must define a valid solution of the 
problem. 
We refer to~\cite{peleg00} for more background.

\subparagraph*{Bounded expansion classes.}
A graph $H$ with vertex set $V(H)=\{v_1,\ldots, v_n\}$ is a \emph{minor} of~$G$, 
written $H\minor G$, if there are pairwise disjoint connected subgraphs 
$H_1,\ldots, H_n\subseteq G$, called {\em{branch sets}},
such that if $\{v_i,v_j\}\in E(H)$, then there are vertices $u_i\in V(H_i)$ and 
$u_j\in V(H_j)$ with $\{u_i,u_j\}\in E(G)$. We call $(H_1,\ldots, H_n)$ a 
{\em{minor model}} of $H$ in~$G$. For $r\in \N$, the graph $H$ 
is a {\em{depth-$r$ minor}} of~$G$, denoted
$H\minor_rG$, if there is a minor model $(H_1,\ldots,H_n)$ of $H$ in~$G$ 
such that each~$H_i$ has radius at most $r$. We write $d(H)$ for the 
\emph{average degree} of~$H$, that is, for the number $2\abs{E(H)}/\abs{V(H)}$.
A class $\CCC$ of graphs has \emph{bounded expansion} if there is a function
$f\colon\N\rightarrow\N$ such that for all $r\in \N$ and all
graphs $H$ and $G\in \CCC$, if $H\minor_r G$, then
$d(H)\leq f(r)$. 
Observe that every $n$-vertex graph from a 
bounded expansion class is $f(0)$-degenerate and hence has at most $f(0)\cdot n$
many edges (depth-$0$ minors of $G$ are its subgraphs). 

\subparagraph*{Generalized colouring numbers.}
Let $G$ be a graph. A \emph{linear order} $L$ of $V(G)$ is a reflexive, anti-symmetric,
transitive total binary relation 
$L\subseteq V(G)\times V(G)$. In the following, we will write $u\leq_Lv$ 
instead of $(u,v)\in L$. We write $\Pi(G)$ for the set of all linear orders on $V(G)$. 
Let $r\in \N$ and let $u,v\in V(G)$. Vertex $u$ is \emph{weakly 
$r$-reachable} from vertex $v$ with respect to a linear order $L\in \Pi(G)$ if there 
exists a path $P$ of length at most~$r$ between $u$ and $v$ such that 
$u$ is minimum among the vertices of $P$ (with respect to $L$). Let 
$\Wreach_r[G, L, v]$ be the set of vertices that are weakly $r$-reachable 
from~$v$ with respect to~$L$. Note that $v\in \Wreach_r[G,L,v]$. The 
\emph{weak $r$-colouring number} $\wcol_r(G)$ of $G$ is defined as
\begin{align*}
  \wcol_r(G)=\min_{L\in\Pi(G)}\max_{v\in V(G)}\abs{\Wreach_r[G,L, v]}.
\end{align*}
The generalized colouring numbers were introduced by Kierstead and Yang in the 
context of colouring games and marking games on graphs \cite{kierstead03}, and 
received much attention as a measure for uniform sparseness in graphs, in particular, 
they can be used to characterize classes of bounded expansion. 

\begin{theorem}[Zhu \cite{zhu2009colouring}]\label{thm:zhu}
A class $\CCC$ of graphs has bounded expansion if and only if there is a function 
$f\colon \N\rightarrow\N$ such that $\wcol_r(G)\leq f(r)$ for all $r\in \N$. 
\end{theorem}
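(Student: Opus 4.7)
The plan is to prove Zhu's characterisation by relating the weak $r$-colouring number of $G$ to the density of depth-$r'$ minors of $G$, losing only a function of $r$ in each direction.

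For the backward direction ($\Leftarrow$), suppose $\wcol_r(G) \leq f(r)$ for every $r \in \N$ and every $G \in \CCC$, and let $H \minor_r G$ be any depth-$r$ minor. I would fix a minor model $(B_v)_{v \in V(H)}$ with each $B_v$ of radius at most $r$, choose a centre $c_v \in B_v$, and take a linear order $L$ on $V(G)$ witnessing $\wcol_{2r+1}(G) \leq f(2r+1)$. For every edge $\{u,v\} \in E(H)$, there is a path $P_{uv}$ in $G$ of length at most $2r+1$ joining $c_u$ and $c_v$ (follow a shortest path in $B_u$ from $c_u$ to the endpoint of the connecting $G$-edge, cross that edge, and then follow a shortest path in $B_v$). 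Its $L$-minimum vertex $w_{uv}$ lies in $\Wreach_{2r+1}[G, L, c_u] \cap \Wreach_{2r+1}[G, L, c_v]$. A counting argument that orients each edge towards the endpoint whose centre is $L$-larger and charges the edge to the pair consisting of that endpoint and its weak-reachability witness $w_{uv}$ bounds $|E(H)|$ by a polynomial function of $f(2r+1)$ times $|V(H)|$, so $d(H)$ is bounded and $\CCC$ has bounded expansion.

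For the forward direction ($\Rightarrow$), assuming $\CCC$ has bounded expansion, I would bound $\wcol_r$ via the intermediate generalised colouring numbers $\adm_r$ (admissibility) and $\col_r$ (strong colouring number), relying on the classical chain $\adm_r(G) \leq \col_r(G) \leq \wcol_r(G)$ and the reverse estimate $\wcol_r(G) \leq \col_r(G)^r$. First, $\adm_r(G)$ is bounded by a function of the shallow minor densities of $G$: if $\adm_r(G)$ is very large, then a greedy selection of internally vertex-disjoint short paths, applied in parallel at many vertices of $G$ and then contracted, yields a dense depth-$r$ minor, contradicting bounded expansion. Next, $\col_r(G)$ is bounded by a function of $\adm_r(G)$ by iteratively extending a partial linear order while always appending a vertex of the remaining subgraph whose admissibility guarantee keeps the set of smaller strongly reachable vertices small. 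Finally, the estimate $\wcol_r(G) \leq \col_r(G)^r$ is proved by induction on $r$, decomposing a path that witnesses weak $r$-reachability as a concatenation of at most $r$ strong-reachability hops obtained by repeatedly extracting $L$-minima.

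The main obstacle is the forward direction, and specifically the reduction from high admissibility to a dense depth-$r$ minor. The greedy construction must simultaneously guarantee that branch sets have radius at most $r$, that they are pairwise vertex-disjoint, and that there are enough such branch sets to yield large \emph{average} degree rather than merely a single vertex of large minor-degree. This requires a careful pigeonhole argument on the admissibility-witnessing paths, which is the technical core of Zhu's proof; the remaining inequalities in the chain are comparatively routine once the admissibility bound is in place.
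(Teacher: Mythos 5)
First, a point of reference: the paper does not prove this statement at all --- it is Zhu's theorem, imported as a black box from \cite{zhu2009colouring} --- so there is no in-paper proof to compare your sketch against. Judged on its own, your architecture is the standard modern route: a weak-reachability charging argument for ``bounded $\wcol_r$ implies bounded expansion'', and the chain $\adm_r \le \col_r \le \wcol_r \le \col_r^{\,r}$ together with ``large admissibility forces a dense shallow minor'' for the converse. This is closer to Dvo\v{r}\'ak's treatment than to Zhu's original argument, which works directly with the shallow-minor densities $\nabla_i(G)$ and builds the witnessing order by an iterated greedy/augmentation scheme without passing through admissibility. Both routes are legitimate; yours reuses lemmas that are by now standard, at the price of worse explicit bounds, and you correctly locate the technical core (admissibility versus shallow minor density) of the forward direction.

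There is, however, one step in the backward direction that fails as written. You charge the edge $\{u,v\}$ of $H$ to the pair $(z, w_{uv})$ where $z$ is the endpoint with the $L$-larger centre. To get $|E(H)| \le \Oof(f(2r+1))\cdot |V(H)|$ you need each such pair to absorb only boundedly many edges, but if the witness $w_{uv}$ happens to lie in $B_z$ itself, then many edges $\{u_1,z\}, \{u_2,z\}, \dots$ can all produce the same witness $w \in B_z$, and nothing in the pair $(z,w)$ recovers the other endpoint: while $|\Wreach_{2r+1}[G,L,c]|$ is bounded for each centre $c$, a fixed vertex $w$ can be weakly reachable from unboundedly many centres, so the charging is not even boundedly-to-one. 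The repair is to exploit the disjointness of the branch sets: $w_{uv}$ lies in exactly one of $B_u, B_v$, and the edge should be charged to the endpoint whose branch set does \emph{not} contain $w_{uv}$. Then $w_{uv}$ determines the opposite endpoint, the map from edges to pairs is injective, each vertex of $H$ receives at most $f(2r+1)$ charges, and $d(H) \le 2f(2r+1)$ follows. With that one-line fix (and granting the cited facts relating $\adm_r$, $\col_r$ and $\wcol_r$ in the forward direction), the plan goes through.
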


Bounds for several restricted classes such as graphs 
of bounded tree-width, planar graphs or graphs with excluded (topological)
minors, were
provided in~\cite{GroheKRSS15,KreutzerPRS16,van2015generalised}. 
\newcounter{lemma-orderfromaugmentation-new}
\setcounter{lemma-orderfromaugmentation-new}{\value{theorem}}
\newcommand{\prfOrderfromaugmentationNew}{}%
The weak colouring numbers can be well approximated in linear
time as shown in \cite{dvovrak13}. 

\begin{theorem}[Dvo\v{r}\'ak~\cite{dvovrak13}]\label{crl:crl-orientation}
Let $\CCC$ be a class of bounded expansion. There is a linear time
algorithm and a function $d:\N\rightarrow \N$ which on input
$G\in \CCC$ and $r\in \N$ computes in linear time 
a linear order of $V(G)$ witnessing that
$\wcol_r(G)\leq d(r)$. 
\end{theorem}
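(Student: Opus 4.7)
The plan is to reduce the problem to finding a linear order with bounded $r$-admissibility, and to realize this via an iterated augmentation of $G$ that captures short distances. Define $\adm_r(G,L,v)$ as the maximum number of pairwise internally vertex-disjoint paths of length at most $r$ from $v$ to vertices strictly smaller than $v$ in $L$. A standard comparison (Kierstead--Yang, quantified in the style of \cite{dvovrak13}) gives $\wcol_r(G) \leq h(r,\adm_r(G,L))$ for some function $h$, so it is enough to produce in linear time an order $L$ witnessing that $\adm_r(G,L) \leq a(r)$ for a constant $a(r)$ depending only on $\CCC$ and $r$.

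For the combinatorial part I would argue that every subgraph $H$ of any $G\in\CCC$ contains a vertex $v$ with $\adm_r(H,v)\leq a(r)$: otherwise Menger's theorem would yield, around every vertex, too many internally disjoint paths of length $\leq r$ to a common target set, and contracting these paths produces a dense depth-$r$ topological minor, contradicting the bounded expansion assumption. This immediately gives a naive algorithm: repeatedly locate a vertex $v$ of low admissibility in the remaining graph, place it as the next-smallest element of $L$, and delete it. The resulting order has $\adm_r(G,L)\leq a(r)$, hence $\wcol_r(G)\leq d(r) := h(r,a(r))$.

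To get linear running time I would avoid the repeated max-flow computations by instead reading the order off an iterated fraternal augmentation of $G$ in the sense of Ne\v{s}et\v{r}il--Ossona de Mendez. Start from a bounded out-degree orientation of $G$, which is available in linear time because $G$ is $f(0)$-degenerate; then perform $r$ augmentation rounds, where each round adds, for every pair of vertices sharing an out-neighbor, a new edge, and re-orients the result to restore bounded out-degree. The key structural input is that on a bounded expansion class each round runs in linear time and the resulting supergraph $G^{(r)}$ again lies in a (larger) bounded expansion class, with degeneracy at most some $d(r)$; moreover any two vertices at $G$-distance at most $r$ are adjacent in $G^{(r)}$. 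A bucket-queue degeneracy ordering of $G^{(r)}$ is also linear time, and since every $u\in\Wreach_r[G,L,v]$ is in particular a back-neighbor of $v$ in $G^{(r)}$, we conclude $\abs{\Wreach_r[G,L,v]}\leq d(r)+1$ as required.

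The principal obstacle, and the technical heart of the argument, is the structural claim that each fraternal augmentation round preserves bounded expansion with a quantitatively controlled blow-up of the expansion function. All the genuine sparsity work is concentrated here; once it is granted, the remaining steps reduce to routine maintenance of priority queues indexed by current out-degree as edges are inserted, and to a clean bookkeeping of the constants $a(r)$ and $d(r)$.
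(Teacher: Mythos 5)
First, a point of comparison: the paper does not prove this statement at all --- it is imported verbatim from Dvo\v{r}\'ak's paper, so there is no in-paper proof to measure your sketch against. Your plan is drawn from the right toolbox (transitive--fraternal augmentations in the style of Ne\v{s}et\v{r}il--Ossona de Mendez, which is indeed how such linear-time orderings are obtained), but as written it hinges on a claim that is false. You assert that after $r$ augmentation rounds ``any two vertices at $G$-distance at most $r$ are adjacent in $G^{(r)}$'' while $G^{(r)}$ simultaneously has degeneracy at most $d(r)$. Take $G=K_{1,n}$, the star: all $n$ leaves are pairwise at distance $2$, so any graph containing all those pairs as edges contains $K_n$ and has degeneracy $n-2$; yet the star lies in every bounded expansion class. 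So for $r\ge 2$ the two properties you need of $G^{(r)}$ are jointly unsatisfiable. Concretely, with a bounded out-degree (equivalently, in-degree) orientation, a $2$-path $x-v-y$ comes in three orientation types, and exactly one of them --- the one where $v$ is the ``cheap'' endpoint of both arcs --- cannot be closed up without destroying sparsity; the augmentation deliberately leaves it open. Consequently your final step, that every $u\in\Wreach_r[G,L,v]$ is a back-neighbour of $v$ in $G^{(r)}$, giving $\abs{\Wreach_r[G,L,v]}\le d(r)+1$, does not go through: weak reachability is not captured by a single back-neighbourhood of the augmented graph.

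The actual argument is more delicate precisely at this point. The correct invariant is of the form: for every path of length at most $r$ between $u$ and $v$, the iterated augmentation contains a vertex $w$ \emph{on that path} with arcs $w\to u$ and $w\to v$ (a common ``dominator''), not an edge $uv$; one then has to choose the linear order so that this dominator structure translates into a bound on $\Wreach_r$, and the resulting constant is a composition of the per-round degeneracies rather than $d(r)+1$. Separately, your opening paragraph on $r$-admissibility and Menger's theorem is never used by the rest of the argument --- once you claim the back-neighbourhood bound directly, the reduction $\wcol_r\le h(r,\adm_r)$ is vestigial --- so either that route or the augmentation route should be carried out in full, but the two halves as written do not connect.
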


%
The next theorem is implicit in \cite[Section 4.4]{nevsetvrildistributed} and shows that we can compute these orders also
in the distributed setting. 

%

\begin{theorem}[Ne\v{s}et\v{r}il and Ossona de Mendez \cite{nevsetvrildistributed}]\label{crl:compute-distributed-wcol}
Let $\CCC$ be a class of bounded expansion and let $r\in \N$. There is 
a constant $d(r)$ such that one can compute for every $G\in \CCC$
in $\Oof(r^2\log n)$ communication rounds in the \congestbc model
an order of $V(G)$ witnessing that \mbox{$\wcol_r(G)\leq d(r)$}. 
\end{theorem}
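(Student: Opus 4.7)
The plan is to adapt the iterative augmentation scheme of Nešetřil and Ossona de Mendez, which combines two ingredients: (i) the distributed low-out-degree orientation algorithm of Barenboim and Elkin, which, for any $k$-degenerate graph, produces an orientation of out-degree $\Oof(k)$ in $\Oof(\log n)$ rounds using $\Oof(\log n)$-bit messages compatible with the \congestbc model; and (ii) the structural fact from the grad theory that bounded expansion is preserved under ``shallow augmentations'' obtained by adding edges between pairs of vertices sharing a short directed path in the current orientation. Because depth-$0$ minors are subgraphs, every $G\in\CCC$ is $f(0)$-degenerate, so the first Barenboim--Elkin call produces in $\Oof(\log n)$ rounds an orientation $\vec G$ of constant maximum out-degree.

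Next, I would iteratively build a sequence $\vec G = \vec G_0, \vec G_1, \ldots, \vec G_r$ of oriented graphs on $V(G)$. To pass from $\vec G_i$ to $\vec G_{i+1}$, each vertex $v$ broadcasts its constant-size out-neighborhood in $\vec G_i$; this communication, even under the broadcast restriction, needs only $\Oof(1)$ rounds per augmentation step because out-degrees remain bounded by a constant depending on $r$ and $\CCC$. New edges are then inserted joining pairs of vertices connected by short directed paths through $\vec G_i$, producing an underlying graph $G_{i+1}$ that, by the grad preservation lemma, belongs to a bounded expansion class depending only on $\CCC$ and $i$. Hence $G_{i+1}$ is still $\Oof(1)$-degenerate, so Barenboim--Elkin reorients it in another $\Oof(\log n)$ rounds to yield $\vec G_{i+1}$. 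After $\Oof(r)$ such iterations, a back-bone argument shows that the out-neighborhood of $v$ in $\vec G_r$, viewed as a subset of $V(G)$, contains all vertices weakly $r$-reachable from $v$ in $G$ with respect to the order $L$ obtained by topologically peeling $\vec G_r$ from the vertex of smallest out-degree.

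From $\vec G_r$ I then extract the desired linear order $L\in\Pi(G)$ by the standard degeneracy-peeling rule applied to the final augmented graph, and read off the bound $|\Wreach_r[G,L,v]|\le \deg^+_{\vec G_r}(v)+1\le d(r)$, where $d(r)$ depends only on $\CCC$ and $r$. Adding up the $\Oof(r)$ augmentation phases, each costing $\Oof(\log n)$ rounds for the Barenboim--Elkin orientation plus $\Oof(r)$ rounds of constant-size broadcasts to disseminate out-neighborhoods (whose identifiers of size $\Oof(\log n)$ must be sent one by one under the broadcast constraint), gives a total of $\Oof(r\cdot(\log n+r)) = \Oof(r^2\log n)$ communication rounds in the \congestbc model.

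The main obstacle is the \congestbc restriction that a vertex must send the \emph{same} $\Oof(\log n)$-bit message to all its neighbors in a round: this is precisely what forces the extra factor of $r$ in the round complexity, since disseminating the constant-size out-neighborhood of a vertex at iteration $i$ cannot be done in a single round but must be serialized over $\Oof(r)$ broadcasts. Verifying that the augmentation preserves bounded expansion at every level, and that $\Oof(r)$ levels are enough to have the out-neighborhoods in $\vec G_r$ capture the entire weakly $r$-reachable sets in $G$, is the most delicate structural ingredient and is exactly what is carried out in Section 4.4 of~\cite{nevsetvrildistributed}.
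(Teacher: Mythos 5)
The paper does not prove this statement itself: it imports it as implicit in Section~4.4 of Ne\v{s}et\v{r}il and Ossona de Mendez~\cite{nevsetvrildistributed}, remarking only that the order arises as a by-product of their procedure $\mathrm{orient}(z,C)$, which interleaves Barenboim--Elkin orientations with shallow augmentations exactly as you describe. Your sketch is therefore essentially the same argument as the paper's (namely, the cited one), with the genuinely delicate points---that each augmentation stays within a bounded expansion class, that boundedly many levels capture weak $r$-reachability, and the precise round accounting under the broadcast restriction---deferred to the same reference, just as the paper defers them.
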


The procedure described in~\cref{crl:compute-distributed-wcol} uses an algorithm by
Barenboim and Elkin~\cite{barenboim2010sublogarithmic} 
which computes an orientation of degenerate graphs. 
For this, we must assume that all vertices know the order $n$ of the input
graph. The order is represented by assigning every vertex a \emph{class-id}, 
which together with the unique vertex-id induces a total order of $V(G)$. 
We remark that (though not explicitly stated) this order can be obtained
 as a by-product of the procedure 
$\mathrm{orient}(z,C)$ described in \cite[Section 4.4]{nevsetvrildistributed}. 

In fact, it suffices if the vertices know a polynomial
approximate of $n$ as this changes $\log n$ only by a constant
factor. In the following, we will assume that the exact value of $n$ 
is available to avoid unnecessary complication.

%
%

\subparagraph*{Sparse neighborhood covers.} Let $G$ be a graph. For $r\in \N$, an 
\emph{$r$-neighborhood cover} of $G$ is a set~$\XXX$ of subsets $X\subseteq V(G)$, called the {\em clusters} of $\XXX$, such that for each $v\in V(G)$ there is some 
$X\in \XXX$ with $N_r(v)\subseteq X$. The \emph{radius} of $\XXX$ is the 
maximum radius of the graph induced by a cluster $X\in\XXX$. 
Note that in every $r$-neighborhood 
cover of bounded radius each of the clusters induces a connected subgraph of~$G$. 
The \emph{degree} $d_\XXX(v)$ of a vertex 
$v\in V(G)$ with respect to $\XXX$ is the number of clusters that contain~$v$. 
The degree of~$\XXX$ is the maximum degree $d_\XXX(v)$ over all vertices 
$v\in V(G)$. The generalized colouring numbers can be used to construct 
sparse neighborhood covers. 

We fix a number $r\in \N$ for the remainder of
the paper. 
For a vertex $v\in V(G)$, let
$X_v$ be the set of the vertices $w$ such that $ v\in \Wreach_{2r}[G,L,w]$.
\begin{theorem}[Grohe et al.~\cite{grohe2014deciding}]\label{thm:nc}
Let $G$ be a graph and let $c, r\in \N$. Let~$L$ be an order witnessing that 
$\wcol_{2r}(G)\leq c$. Then the collection $\mathcal{X} = \{X_v \mid  v\in V(G)\}$ 
is an $r$-neighborhood cover of~$G$ of radius $2r$ and degree $c$. 
\end{theorem}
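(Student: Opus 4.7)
The plan is to verify the three conditions on $\XXX$---degree, covering, and radius---one by one; each follows by unwinding the definition $w \in X_v \iff v \in \Wreach_{2r}[G,L,w]$ together with the defining property of $L$.

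The degree bound is the shortest step: the clusters $X_v$ that contain a fixed vertex $w$ are indexed exactly by the vertices $v \in \Wreach_{2r}[G,L,w]$, so $d_{\XXX}(w) = \abs{\Wreach_{2r}[G,L,w]} \leq c$ by the assumption on $L$.

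For the covering property, for each $u \in V(G)$ I would take $v := \min_L N_r[u]$ and show $N_r[u] \subseteq X_v$. Given $w \in N_r[u]$, I would concatenate a shortest $v$-$u$ path with a shortest $u$-$w$ path to obtain a $v$-$w$ walk $P$ of length at most $2r$; every vertex of $P$ lies within distance $r$ of $u$ and is thus in $N_r[u]$. By the $L$-minimality of $v$ in $N_r[u]$, vertex $v$ is the $L$-smallest vertex appearing on $P$, so $P$ witnesses $v \in \Wreach_{2r}[G,L,w]$, i.e.\ $w \in X_v$.

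For the radius, the main step is to show that for any $w \in X_v$, a witnessing path $P$ of length $\leq 2r$ from $v$ to $w$ lies entirely inside $G[X_v]$. This follows from the closure of weak $2r$-reachability under initial subpaths: for any vertex $u$ on $P$, the prefix of $P$ from $v$ to $u$ still has length $\leq 2r$ and still has $v$ as its $L$-smallest vertex, so $u \in X_v$. Hence $v$ is a center of $G[X_v]$ at eccentricity $\leq 2r$, yielding both connectedness of the cluster and the desired radius bound. I do not foresee any real obstacle; each step is essentially forced by the definitions, and the argument does not rely on the bounded-expansion hypothesis beyond the given bound $c$ on $\abs{\Wreach_{2r}[G,L,\cdot]}$.
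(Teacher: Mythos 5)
Your proof is correct; the paper itself does not prove this theorem (it is quoted from Grohe et al.), but your covering argument coincides with the paper's proof of \cref{lem:R_v}, and your degree and radius steps are the standard ones. One cosmetic remark: the concatenation of the $v$--$u$ and $u$--$w$ shortest paths is a walk rather than a path, so you should shortcut it to a path on a subset of its vertices (on which $v$ is still the $L$-minimum) before invoking the definition of $\Wreach_{2r}[G,L,w]$.
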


Hence, by combining~\cref{thm:zhu} and~\cref{thm:nc}, we obtain $r$-neighborhood
covers of radius at most $r$ and degree at most $f(r)$ for every class of
bounded expansion. 
We will show in \cref{sec:distributed} how to compute
the $r$-neighborhood covers presented in \cref{thm:nc} in a distributed setting.


\section{Approximating 
dominating sets}\label{sec:approx}

Our first result shows how to (sequentially) compute good \mbox{distance-$r$} dominating sets in any fixed class 
of bounded expansion. 
%
%
The remainder of this section is devoted to the proof of the following theorem. 
\begin{theorem}\label{thm:approxrdom}
For every class $\CCC$ of bounded expansion there is a function $c\colon \N\rightarrow \N$ 
and a linear time algorithm which on input $G\in \CCC$ and $r\in \N$ 
computes an order $L\in \Pi(G)$ witnessing $\wcol_{2r}(G)\leq c(r)$ 
and a $c(r)$-approximation of a minimum distance-$r$ dominating set of~$G$.
\end{theorem}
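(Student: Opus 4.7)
The plan is to combine the ``minimum weakly-reachable witness'' construction with the order produced by \Cref{crl:crl-orientation}. First I would invoke \Cref{crl:crl-orientation} with parameter $2r$ to obtain, in linear time, a linear order $L$ of $V(G)$ witnessing $\wcol_{2r}(G)\le c(r)$, where $c$ depends only on $\CCC$. Then, for each $v\in V(G)$, define $\pi(v):=\min_L\Wreach_r[G,L,v]$; this is well-defined since $v\in\Wreach_r[G,L,v]$, and $\pi(v)\in N_r[v]$ follows directly from the definition of weak reachability. My algorithm outputs $D:=\{\pi(v):v\in V(G)\}$, which is automatically a distance-$r$ dominating set of $G$.

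For the approximation ratio, let $D^*$ be a minimum distance-$r$ dominating set. The key claim is $D\subseteq\bigcup_{d^*\in D^*}\Wreach_{2r}[G,L,d^*]$, from which
$$|D|\;\le\;\sum_{d^*\in D^*}\bigl|\Wreach_{2r}[G,L,d^*]\bigr|\;\le\;c(r)\cdot|D^*|$$
follows at once. To prove the claim, fix $v$ and pick $d^*\in D^*\cap N_r[v]$; let $P$ be a path of length at most $r$ from $d^*$ to $v$ and let $Q$ be a path of length at most $r$ from $v$ to $\pi(v)$ on which $\pi(v)$ is $L$-minimum. The concatenation is a walk $W=P\cdot Q$ of length at most $2r$; let $w$ be its $L$-minimum vertex. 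Whether $w$ lies on $P$ or on $Q$, the subwalk of $W$ from $v$ to $w$ has length at most $r$ and has $w$ as its $L$-minimum (because $w$ is $L$-minimum on all of $W$); shortcutting yields a genuine path of length $\le r$ from $v$ to $w$ on which $w$ is still $L$-minimum, so $w\in\Wreach_r[G,L,v]$. Minimality of $\pi(v)$ forces $\pi(v)\le_L w$, while the choice of $w$ gives $w\le_L\pi(v)$; hence $w=\pi(v)$. Shortcutting $W$ itself then yields a path of length at most $2r$ from $d^*$ to $\pi(v)$ on which $\pi(v)$ is $L$-minimum, establishing $\pi(v)\in\Wreach_{2r}[G,L,d^*]$.

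The main remaining obstacle is to compute $\pi$ in total linear time. My plan is to process the vertices in increasing $L$-order: for each $u$ in turn, run a BFS of depth $r$ from $u$ inside $G[\{w:w\ge_L u\}]$ and set $\pi(v):=u$ for every newly discovered $v$. The set of vertices reached from $u$ is precisely $\{v:u\in\Wreach_r[G,L,v]\}$, so across all $u$ the number of vertex visits is $\sum_v|\Wreach_r[G,L,v]|\le c(r)\cdot n$. Exchanging sums, the total edge-scan cost is $\sum_v\deg_G(v)\cdot|\Wreach_r[G,L,v]|\le c(r)\cdot 2|E(G)|=O(n)$, using the $f(0)$-degeneracy of graphs in $\CCC$. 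Combined with Step~1 this gives the claimed linear-time $c(r)$-approximation algorithm, with the main subtlety being the identification $w=\pi(v)$ in the charging argument above.
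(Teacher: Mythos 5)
Your proposal is correct and follows essentially the same route as the paper: the same set $D=\{\min_L\Wreach_r[G,L,v]\mid v\in V(G)\}$ built on Dvo\v{r}\'ak's linear-time order, the same charging of each chosen vertex to an optimal dominator via weak $2r$-reachability (your walk-concatenation argument is just an inlined version of the paper's \Cref{lem:R_v} and the cover $\{X_v\}$), and the same BFS restricted to $L$-larger vertices for the linear-time implementation.
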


Our result improves the following result of Dvo\v{r}\'ak~\cite{dvovrak13} who
proved that there exists a $c(r)^2$-approximation, where $c(r)$ 
is the constant in \cref{thm:approxrdom}. 

\ifthenelse{\boolean{IsProc}}{\bigskip}{\bigskip}
Recall the definition of $X_v$ (see~\cref{sec:prelim}):
\begin{equation}
X_v = \{ w\in V(G) \mid  v\in \Wreach_{2r}[G,L,w]\}\,.\label{def:Xv}
\end{equation}
We define for a fixed vertex $v\in V(G)$,
\[R_v \coloneqq \{w \in X_v \mid  v = \min
\Wreach_r[G,L,w]\}\,.\]

\newcounter{lemma-Rv}
\setcounter{lemma-Rv}{\value{theorem}}
\begin{lemma}\label[lemma]{lem:R_v}
  For all vertices $v\in V(G)$ and for all $w\in R_v$ we have $N_r[w] \subseteq X_v$.
\end{lemma}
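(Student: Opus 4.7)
The plan is to unfold the definitions and then stitch together two short paths via $w$.

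Fix $v \in V(G)$, $w \in R_v$, and pick an arbitrary $u \in N_r[w]$; the goal is to show $v \in \Wreach_{2r}[G,L,u]$, which is exactly $u \in X_v$. Since $w \in R_v$, we know in particular that $v \in \Wreach_r[G,L,w]$, so there is a path $P_1$ from $v$ to $w$ of length at most $r$ on which $v$ is the $L$-minimum vertex. Since $u \in N_r[w]$, there is also a path $P_2$ from $w$ to $u$ of length at most $r$.

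The key step is to control the minimum of $P_2$. Let $z$ be the $L$-minimum vertex of $P_2$, and consider the prefix of $P_2$ from $w$ to $z$; this is a path of length at most $r$ from $w$ whose minimum vertex is $z$, so $z \in \Wreach_r[G,L,w]$. But by definition of $R_v$, $v = \min \Wreach_r[G,L,w]$, so $v \leq_L z$. Hence every vertex on $P_2$ is $\geq_L v$, and every vertex on $P_1$ is $\geq_L v$ by choice of $P_1$.

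Now concatenate $P_1$ and $P_2$ to obtain a walk from $v$ to $u$ of length at most $2r$, and extract from it a simple path $P$ from $v$ to $u$; clearly $P$ has length at most $2r$, and every vertex of $P$ lies on $P_1$ or $P_2$, hence is $\geq_L v$. Thus $v$ is the $L$-minimum vertex on $P$, witnessing $v \in \Wreach_{2r}[G,L,u]$, i.e.\ $u \in X_v$.

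There is no real obstacle here — the only subtlety to be careful about is that the path $P_2$ may contain vertices smaller than $v$ \emph{a priori}, which is ruled out precisely by the defining property $v = \min \Wreach_r[G,L,w]$ of membership in $R_v$; this is what forces the choice of $\Wreach_r$ (rather than $\Wreach_{2r}$) in the definition of $R_v$.
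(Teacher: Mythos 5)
Your proof is correct and follows essentially the same route as the paper: concatenate a length-$\leq r$ path from $u$ to $w$ with a length-$\leq r$ path from $w$ to $v$ and check that $v$ is the $L$-minimum of the resulting path of length at most $2r$. The only (cosmetic) difference is that the paper justifies minimality by first observing that $v=\min N_r[w]$ and keeping all vertices inside $N_r[w]$, whereas you bound the minimum of the $w$--$u$ path directly via $z\in\Wreach_r[G,L,w]$ --- in effect spelling out the step the paper dispatches with ``observe first.''
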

\newcommand{\prfRv}{%
\begin{proof}
Let  $w\in R_v$. Observe first that $v$ is the minimum element in $N_r[w]$. 
Now let $u\in N_r[w]$. 
Then there is a path of length at most~$r$ from $u$ to $w$ with all 
vertices in $N_r[w]$ and a path of length at most~$r$ from $w$ to $v$ again with 
all vertices in $N_r[w]$. Hence there is a path of length at most~$2r$
from $u$ to $v$ with all vertices in $N_r[w]$. As $v$ is minimal in $N_r[w]$, 
this path witnesses that $v\in \Wreach_{2r}[G,L,u]$, and hence by definition of $X_v$
it holds that $u\in X_v$. 
\end{proof}%
}
\prfRv

\begin{proof}[Proof of \cref{thm:approxrdom}]
We claim that the 
set 
\begin{equation}
D
 \coloneqq \{\min \Wreach_r[G,L,w] \mid 
 w\in V(G)\} = \{v\in V(G) \mid R_v\neq \emptyset\}
\label{eq:D}
\end{equation}
is a $c(r)$-approximation of a minimum distance-$r$ dominating set. Obviously, $D$ is a distance-$r$ dominating set of $V(G)$, as every vertex~$w$ is
dominated by $\min \Wreach_r[G,L,w]$. It remains to show that we achieve the claimed approximation
ratio. 

For $v\in V(G)$, let $X_v$ and $R_v$ be as above. Let $\mathcal{X}$ be the
collection $\{X_v \mid v\in V(G)\}$ as in \Cref{thm:nc}.  Then $\mathcal{X}$ is
an $r$-neighborhood cover of degree $c(r)$ and by \Cref{lem:R_v}, for $w\in R_v$,
we have $N_r[w]\subseteq X_v$.\ifthenelse{\boolean{IsProc}}{ }{

}%
 Let $M$ be a minimum distance-$r$ dominating set in $G$. As every $w\in V(G)$ can be 
distance-$r$ dominated only from $N_r[w] \subseteq X_v$, it follows that if $w\in R_v$, then $M\cap X_v\neq \emptyset$. 

Hence, as every vertex appears in at most $c(r)$ clusters, it holds that
\begin{equation*}
\abs{D} =\abs{\bigcup_{\stackrel{v\in V(G)}{R_v\neq \emptyset}}\{v\}}\leq 
  \sum_{\stackrel{v\in V(G)}{R_v\neq \emptyset}}\abs{M\cap X_v}\leq c(r)\cdot \abs{M}.
\end{equation*}

We finally show how to compute $D$ in linear time. 
We assume that $G$ is stored in memory by $n$ adjacency 
lists. We first use the linear time approximation algorithm for the weak 
coloring numbers from \cref{crl:crl-orientation} to compute an order $L$, which will be
represented in a way such that one can iterate through the vertices along $L$ in
$\Oof(|V(G)|)$ time and such that a comparison $u<_L w$ for every pair $u,w\in
V(G)$ takes $\Oof(1)$ time. Here we assume that vertices are equipped with 
constant size identifiers representing the order. 
The algorithm is given as \Cref{alg:DomSet}. 

\begin{algorithm}
  \caption{$\mathrm{DomSet}(G,L)$}
  \label{alg:DomSet}
  \begin{algorithmic}[1]
    \Require Graph $V(G)$; $V(G) = \{v_i : 1\le i\le n\}$; $ v_i<_Lv_j$ for
    $i<j$; $A(v_i)$ is the adjacency list of $v_i$ 
    \Ensure A $c(r)$-approximation of a dominating set of $G$
    \State\Call{$\mathrm{SortLists}$}{$L$}\hspace{4cm}\Comment{\Cref{alg:SortLists}}\label{line:SortLists}
    \State $D \gets \emptyset$ 
    \State $Dominated \gets \emptyset$ 
    \For{$i\gets 1,\ldots, n$} 
        \State $N_i\gets $\Call{$\mathrm{BFS}$}{$v_i,L$}\label{line:BFS}\hspace{2cm}\Comment{\Cref{alg:BFS}}
        \If{$N_i\setminus Dominated \neq \emptyset$}
          \State $D\gets D\cup \{v_i\}$
          \State $Dominated \gets Dominated \cup N_i$
        \EndIf
    \EndFor
    \State{\textbf{return} $D$}
  \end{algorithmic}
\end{algorithm}
In the first step (\Cref{line:SortLists} of \Cref{alg:DomSet} and
\Cref{alg:SortLists}), we ensure in linear time that every adjacency list is
sorted increasingly with respect to~$L$.  \Cref{alg:SortLists} iterates through
the vertices of~$G$ in order $L$ starting from the least vertex (such that the next vertex can always be found
in constant time) and thus has running time $\Oof(m)$. As graphs of bounded
expansion are degenerate, we have $m \in \Oof(n)$.

\begin{algorithm}
  \caption{$\mathrm{SortLists}(L)$}
\label{alg:SortLists}
  \begin{algorithmic}[1]
    \Require Graph $V(G)$; $V(G) = \{v_i : 1\le i\le n\}$; $ v_i<_Lv_j$ for $i<j$; $A(v_i)$ is the adjacency list of $v_i$ 
    \Ensure  $A(v_i)$ is increasingly sorted with respect to $L$
    \For{$i\gets 1,\ldots,n$}
      $B(v_i) \gets A(v_i)$ and $A(v_i)\gets ()$
      \EndFor
    \For{$i\gets 1,\ldots,n$}
      \For{$v_j \in B(v_i)$}
      \State add $v_i$ at the end of $A(v_j)$
      \EndFor
      \EndFor
  \end{algorithmic}
\end{algorithm}

Now \Cref{alg:DomSet} iterates through the vertices of $G$ starting with the
least element $v_1$ along $L$. For every $v_i\in V(G)$, it uses \Cref{alg:BFS}
to compute the set of vertices that are bigger than $v_i$ and are dominated
by~$v_i$. If such a vertex is not dominated by a vertex smaller than $v_i$, it
serves as a vertex $w$ in the definition of~$D$ (see \cref{eq:D}). Indeed,
$v_i \in \Wreach_r[G,L,w]$ because $w$ was found by a breadth-first search from
$v$ restricted to vertices greater than $v$ with respect to~$L$ and to distances
at most $r$. On the other hand, if not $v$ was
the minimum vertex in $\Wreach_r[G,L,w]$ but, say, $u$, then $w$ would be
dominated by $u$ and added to the set $Dominated$ in the earlier iteration $j$
for $u =v_j$. Thus $v_i = \min \Wreach_r[G,L,w]$ and $v_i$ is
added to~$D$.

\begin{algorithm}
  \caption{$\mathrm{BFS}(v,L,r)$}
  \label{alg:BFS}
  \begin{algorithmic}[1]
    \Statex\LineComment{As usual BFS, but restricted to vertices $w>_Lv$ and
      only for $r$ steps.}
    \Require Graph $G$; for $w\in V(G)$, the adjacency list $A(w)$ is increasingly sorted w.r.t.\@ to
      $L$ 
    \State $Q\gets$ empty queue 
    \State Enqueue $(v,0)$ in $Q$
    \While{$Q\neq \emptyset$} 
      \State $(w,dist)\gets$ dequeue from $Q$
      \If{$dist < r$}
        \For{$u \in A(w)$ and $u>_L v$} \label{line:iter_through_adj_list}
          \If{$u$ not marked as visited}
            \State mark $u$ as visited 
            \State enqueue $(u,dist+1)$ in $Q$
          \EndIf
        \EndFor
      \EndIf
    \EndWhile
    \State{\textbf{return} marked vertices}
  \end{algorithmic}
\end{algorithm}

Let us estimate the running time of \Cref{alg:DomSet}.  Recall that
\Cref{line:SortLists} has linear running time. Note that every set
$N_i$ computed in \Cref{{line:BFS}} for a vertex $v_i$ is a subset of $X_{v_i}$
because \Cref{alg:BFS} restricts its search to vertices bigger than $v_i$ and to
distances at most~$r$. That is, if $w\in N_{v_i}$, then
$v_i\in \Wreach_r[G,L,w] \subseteq \Wreach_{2r}[G,L,w]$ and thus $w\in
X_{v_i}$.
As every graph $G\in \CCC$ is $c(r)$-degenerate, every induced
subgraph $H\subseteq G$ has at most $c(r)\cdot |V(H)|$ many edges, also the graph
induced by $N_i$. When constructing~$N_i$ in \Cref{alg:BFS} we will only visit vertices of
$N_i$ and, for every $w\in N_i$ at most one vertex in its adjacency list that is
not in $N_i$. This can be achieved if \Cref{line:iter_through_adj_list} of
\Cref{alg:BFS} is implemented as an iteration through $A(w)$ starting from the
biggest vertex and stopping if a vertex $u\in A(w)$ with $u<_Lw$ is reached
(recall that $A(w)$ sorted). Hence
this search requires time at most
$\Oof((c(r)+1)\cdot |N_{v_i}|) = \Oof(c(r)\cdot |X_{v_i}|)$. As every vertex $w$ appears
in at most~$c(r)$ clusters $X_v$, we obtain
a running time of $\sum_{v\in V(G)} \Oof(c(r)\cdot |X_v|)=\Oof(c(r)^2\cdot n)$.
\end{proof}

Note that by \cref{thm:zhu} the constant $c(r)$ in the theorem exists for 
every class of bounded expansion. Besides the improved approximation ratio, 
our algorithm is simpler than that of~\cite{dvovrak13}. In particular, given an 
order $L\in\Pi(G)$, it can straightforwardly be implemented in a
distributed way.

\section{Distributed \textbf{\textit{r}}-neighborhood covers and \textbf{\textit{r}}-dominating sets}\label{sec:distributed}

In this section we will show how to compute sparse 
$r$-neigh\-bor\-hood covers as described in \cref{thm:nc} and the \textsc{Distan\-ce-$r$ Dominating Set} 
of \cref{thm:approxrdom} in a distributed setting.

\ifthenelse{\boolean{IsProc}}%
{}%
{\bigskip} In order to compute $r$-neighborhood covers according to \cref{thm:nc},
we want to compute an order~$L$ of $V(G)$ which witnesses that
$\wcol_{2r}(G)\leq c \coloneqq c(2r)$. In the distributed setting, that means that every vertex $w$
learns its weak reachability set $\Wreach_{2r}[G,L,w]$ and, for each $v\in
\Wreach_{2r}[G,L,w]$, a path within $X_v$ of length at most $2r$ from $w$ to $v$.

In order to find the distance-$r$ dominating set described in~\cref{thm:approxrdom}, 
every vertex $w$ will choose as its dominator the vertex $\min \Wreach_r[G,L,w]$
and send a message to that vertex along the stored path. \ifthenelse{\boolean{IsProc}}{}{(Note
that we computed the order $L$ for the parameter $2r$, but are using it for $r$.)}
Even if all vertices send their messages at once, no vertex will have to 
forward more than~$c$ messages.

\ifthenelse{\boolean{IsProc}}%
{}%
{\bigskip} First, using \cref{crl:compute-distributed-wcol}, 
we compute for a given input graph $G$ an order
$L$ witnessing that \mbox{$\wcol_{2r}(G)\leq c$} using $\Oof(r^2\log n)$
communication rounds. Note that the number of rounds does not depend
on $c$, which influences only the size of messages. The latter is
$\Oof(c^2\cdot r\cdot \log n)$, which implicitly follows from
\cref{crl:compute-distributed-wcol} in~\cite{nevsetvrildistributed}.
We show that every vertex can learn its weak reachability set as 
well as a routing scheme which preserves 
short distances. Recall from \Cref{def:Xv} on \Cpageref{def:Xv} that $X_v$ is
defined as $X_v=\{w\in V(G) : v\in \Wreach_{2r}[G,L,w]\}$. 

\newcounter{lemma-orientation}
\setcounter{lemma-orientation}{\value{theorem}}
\setcounter{theorem}{\value{lemma-orientation}}
\begin{lemma}\label[lemma]{lem:orientation}
  Let $\CCC$ be a class of bounded expansion and $r\in \N$. There is a
  constant $c=c(2r)$ such that for every $G\in \CCC$ there is a linear order $L$
  on $V(G)$ such that $|\Wreach_{2r}[G,L,w]|\leq c$ for all $w\in V(G)$ and in
  $\Oof(r^2\cdot \log n)$ communication rounds (in \congestbc) every vertex $w$ can learn
  $\Wreach_{2r}[G,L,w]$ and for each $v\in\Wreach_{2r}[G,L,w]$ a path $P_{v,w}$
  of length at most $2r$ from $w$ to $v$, which is a shortest path between $v$
  and $w$ in the graph induced by $X_v$.  In particular, if
  $v=\min \Wreach_r[G,L,w]$, then the path $P_{v,w}$ is a shortest path between
  $v$ and~$w$ in $G$.
\end{lemma}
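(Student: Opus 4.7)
The plan is to first invoke \Cref{crl:compute-distributed-wcol} to produce, in $\Oof(r^2 \log n)$ rounds of \congestbc, a linear order $L$ on $V(G)$ witnessing $\wcol_{2r}(G) \leq c$ for some constant $c = c(2r)$. All subsequent work will only add $\Oof(r \cdot c)$ rounds, which is absorbed by the preprocessing bound.

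The main phase is a parallel truncated BFS launched simultaneously from every vertex. The BFS rooted at $v$ traverses only edges whose far endpoint is strictly larger than $v$ in $L$, and it is cut off after $2r$ hops. When a vertex $w$ is reached for the first time by the BFS rooted at $v$, at distance $t$ through an incoming neighbor $u$, it stores the triple $(v,t,u)$; chasing these parent pointers backwards yields the desired path $P_{v,w}$. By the very definition of $X_v$, the set of vertices reached by this restricted BFS within $2r$ hops is exactly $X_v$, so every vertex $w$ ends up knowing precisely $\Wreach_{2r}[G,L,w]$ together with a first-time parent for each source. Moreover, every edge traversed by the BFS lies in $G[X_v]$, and conversely any shortest path in $G[X_v]$ from $v$ to $w$ stays inside $\{v\}\cup\{u : u>_L v\}$ and so is visible to the BFS; hence the BFS tree distance from $v$ to $w$ equals $\dist_{G[X_v]}(v,w)$ and $P_{v,w}$ is a shortest $v$-to-$w$ path in $G[X_v]$ of length at most $2r$.

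To implement this in \congestbc, I would schedule the computation in $2r$ synchronized phases. In phase $t$, each vertex $u$ must inform its neighbors of every source whose signal first reached $u$ during phase $t-1$. Since the total number of sources that ever reach $u$ is $|\Wreach_{2r}[G,L,u]| \leq c$, the total number of broadcasts $u$ performs across all phases is bounded by $c$, and allocating $c$ broadcast slots per phase suffices; each broadcast carries only a source identifier together with its hop count, which fits into $\Oof(\log n)$ bits. The additional cost is thus $\Oof(r \cdot c)$ rounds, leaving the overall bound at $\Oof(r^2 \log n)$. The main bookkeeping point I expect to be delicate is ensuring that signals are released strictly in phase order, so that ``first arrival from source $v$ at phase $t$'' really does coincide with $\dist_{G[X_v]}(v,w)=t$; the phase-by-phase schedule, which never forwards a signal ahead of the BFS wavefront, guarantees this.

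Finally, for the ``in particular'' clause, assume $v = \min \Wreach_r[G,L,w]$, so $w \in R_v$. By \Cref{lem:R_v} we have $N_r[w] \subseteq X_v$. Any shortest $v$-to-$w$ path in $G$ has length $d = \dist(v,w) \leq r$, and each of its vertices $u$ satisfies $\dist(u,w) \leq d \leq r$, hence lies in $N_r[w] \subseteq X_v$. Consequently $\dist_{G[X_v]}(v,w) = \dist_G(v,w)$, and the path $P_{v,w}$ returned by the BFS is therefore also a shortest path between $v$ and $w$ in $G$.
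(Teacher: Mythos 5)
Your proposal is correct and matches the paper's proof in all essentials: compute $L$ via \Cref{crl:compute-distributed-wcol}, then run a $2r$-step parallel BFS from each source $v$ restricted to vertices above $v$ in $L$, using $|\Wreach_{2r}[G,L,u]|\le c$ to bound the congestion at every vertex. The only (harmless) difference is that you forward source identifiers with hop counts and store parent pointers, whereas the paper broadcasts the full paths; your handling of the ``in particular'' clause via \Cref{lem:R_v} is exactly the intended argument.
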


\begin{proof}
  The pseudocode is given in \Cref{alg:distr-WReach}.
  First, using~\cref{crl:compute-distributed-wcol}, we compute for a given graph
  $G$ an order~$L$ witnessing that $\wcol_{2r}(G)\leq c(2r)$ in
  $\Oof(r^2\log n)$ communication rounds.

  \newcommand{\Pw}{\mathcal{P}_w} \newcommand{\pth}{\mathrm{path}}
  \newcommand{\NewPaths}{\mathcal{P}} \newcommand{\sid}{\mathrm{sid}}

  The procedure implicitly uses an algorithm of Barenboim and
  Elkin~\cite{barenboim2010sublogarithmic}, which assigns to each vertex~$v$ a
  \emph{class id} $cl(v)$, which together with the unique vertex identifier
  induces the linear order~$L$.  For ease of presentation, we write $v_i$ for
  the vertex at position $i$ in the order $L$ and call $i$ the \emph{super-id}
  of the vertex.

  \begin{algorithm}
    \caption{$\mathrm{\Wreach{}Dist}(r)$}
    \label{alg:distr-WReach}
    \begin{algorithmic}[1]
      \renewcommand{\algorithmicrequire}{\textbf{Input (for a vertex $w$):}}
      \Require $n=\abs{V(G)}$, adjacency list of $w$, id of $w$
      \Ensure{$\Wreach_{2r}[G,L,w]$, $\min \Wreach_r[G,L,w]$ for a particular
        linear order $L$ (see text)} 
      \State in parallel, compute $L$\Comment{by \Cref{crl:compute-distributed-wcol}}
      \Statex\Comment{when done, every vertex $w$ knows
                   its super-id $\sid(w)$} 
      \State in parallel, $\Pw = \bigl\{\{\sid(w)\}\bigr\}$ 
      \For{$i = 1,\ldots,r$} 
        \State in parallel, broadcast $\Pw$, receive new paths in $\NewPaths$ 
        \State in parallel (for vertex $w$): \State $toSend \gets \emptyset$ 
        \For{$u_1$ first vertex in a path from $\NewPaths$} 
          \If{$\sid(u_1) < \sid(w)$}
            \State $P\gets$ shortest path from $\NewPaths \cup \Pw$ that starts 
                        in $u_1$, break ties using super-ids
          \EndIf
          \If{exists $P'\in \Pw$ with $P'=u_1,\ldots$} 
            \State remove $P'$ from $\Pw$
          \EndIf
          \State $\Pw\gets \Pw\cup \{P\}$\Comment{let
                                  $P = u_1,\ldots,u_j$}
          \State $toSend \gets toSend\cup \{u_1,\ldots,u_j,w\}$
      \EndFor
      \State broadcast the set $toSend$
      \EndFor
    \end{algorithmic}
  \end{algorithm}

  The remaining part of the computation has $2r$ rounds which correspond to $2r$
  rounds of a breadth-first search as in \Cref{alg:BFS}. This time, the search
  is performed in parallel and we have to make sure that only a logarithmic
  amount of information is sent by every vertex for the \congestbc model. The
  idea is that every vertex~$w$ forwards only information about paths that start
  in a vertex $v\in\Wreach_{2r}[G,L,w]$.

  Every vertex $w$
  maintains a set $\mathcal{P}_w$ of paths of length at most $2r$ from vertices
  $v\in \Wreach_{2r}[G,L,w]$. For every vertex $v\neq w$ there is at most one
  path $P_v$ in $\mathcal{P}_w$ that starts in~$v$ and certifies that
  $v\in \Wreach_{2r}[G,L,w]$. In the first round, every vertex broadcasts its
  super-id, which we understand as a path of length $0$. A vertex $w$ receives
  super-ids and stores only those which are smaller than its own super-id.

  In a later iteration, every vertex~$w$ receives some sets of paths from its
  neighbors and computes their union $\mathcal{P}$. For every vertex $u_1$, with
  a super-id greater than the super-id of $w$, all paths from~$\mathcal{P}$
  starting in $u_1$ are discarded. For every vertex~$u_1$ with a smaller
  super-id, vertex $w$ selects the shortest path starting in $u_1$ among all
  paths in $\mathcal{P}$ and $\mathcal{P}_w$. (There is at most one path in
  $\mathcal{P}_w$ that starts in $u_1$.) If there are many such shortest paths,
  $w$ chooses the lexicographically least one (with respect to the
  super-ids). Let this path be $P= u_1,\ldots,u_j$ for some $j\le 2r$. Then $P$
  is stored in $\mathcal{P}_w$ (if there is already a path in $\mathcal{P}_w$
  that starts in $u_1$, it is replaced by $P$). If $j<2r$, then $w$ broadcasts
  the path $u_1,\ldots, u_j,w$.

  Observe that every vertex $w$ forwards information about a vertex $v$ only if
  $v\in \Wreach_{2r}[G,L,w]$. Hence,~$w$ forwards only at most $c$ paths
  simultaneously and the whole procedure works in the \congestbc model. Observe
  also that we perform a breadth-first search through the cluster $X_v$ and
  break ties according to the order by vertex super-ids. This implies our claims
  on shortest paths.
\end{proof}

We can now combine \cref{thm:nc} and \cref{lem:orientation} to obtain the first main theorem of this section.

\begin{theorem}
Let $\CCC$ be a class of bounded expansion. There is a distributed algorithm which 
for every graph $G\in \CCC$ and every $r\in \N$ computes a representation
of a sparse $r$-neighborhood cover 
in the \congestbc model in 
$\Oof(r^2\cdot \log n)$ communication rounds. More precisely, the algorithm
computes an order $L$, represented by $\log n$-sized labels and for every
vertex~$v$ a routing scheme of length at most $2r$ to every vertex in 
$\Wreach_{2r}[G,L,v]$. 
\end{theorem}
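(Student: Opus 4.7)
The plan is to derive the theorem as a direct composition of \cref{lem:orientation} and \cref{thm:nc}, with essentially no additional distributed work beyond bookkeeping.

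First, I would invoke \cref{lem:orientation} with parameter $r$. This gives, in $\Oof(r^2\log n)$ rounds of the \congestbc model, a linear order $L$ on $V(G)$ witnessing $\wcol_{2r}(G)\leq c$ for some constant $c=c(2r)$ depending only on $\CCC$ and $r$. At the same time, every vertex $w$ ends up knowing its set $\Wreach_{2r}[G,L,w]$ together with, for each $v\in\Wreach_{2r}[G,L,w]$, a shortest path $P_{v,w}$ of length at most $2r$ from $w$ to $v$ lying entirely inside $G[X_v]$. The label representing the position of $w$ in $L$ (its class-id together with its vertex-id) has size $\Oof(\log n)$ by construction.

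Second, I would read off the cover using \cref{thm:nc}: the collection $\XXX=\{X_v : v\in V(G)\}$ induced by $L$ is an $r$-neighborhood cover of $G$ of radius $2r$ and degree at most $c$. Since $X_v=\{w : v\in\Wreach_{2r}[G,L,w]\}$, the output of the first step already tells each vertex $w$ precisely which clusters contain it, namely those indexed by the vertices in $\Wreach_{2r}[G,L,w]$. The stored paths $P_{v,w}$ serve as the promised routing scheme of length at most $2r$ from $w$ to the center $v$ of each of its clusters.

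Finally, I would check the claimed complexities. Per-vertex storage is $\Oof(c\cdot r\cdot \log n)$ (at most $c$ paths, each of length at most $2r$, with $\Oof(\log n)$-bit ids), the cover has degree $c$ and radius $2r$ by \cref{thm:nc}, and the round count is dominated by the call to \cref{lem:orientation}. I do not expect any real obstacle: the whole argument is a straightforward reinterpretation of the weak-reachability data produced by \cref{lem:orientation} through the cover construction of \cref{thm:nc}, and the only delicate point, namely that the information sent per round fits into the \congestbc bandwidth, is already handled inside \cref{lem:orientation}.
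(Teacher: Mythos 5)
Your proposal is correct and matches the paper exactly: the theorem is stated there as a direct combination of \cref{lem:orientation} (which supplies the order $L$, the sets $\Wreach_{2r}[G,L,w]$, and the paths $P_{v,w}$ in $\Oof(r^2\log n)$ \congestbc rounds) with \cref{thm:nc} (which reinterprets the clusters $X_v$ as an $r$-neighborhood cover of radius $2r$ and degree $c$). Your additional bookkeeping on label sizes and per-vertex storage is consistent with what the paper implicitly relies on.
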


Also \cref{thm:approxrdom} can now be implemented as a distributed algorithm. 

\begin{theorem}\label{thm:ds}
Let $\CCC$ be a class of bounded expansion and let \mbox{$r\in \N$}. There is a
constant $c(r)$ and a distributed
algorithm which for every graph $G\in \CCC$ computes a $c(r)$-approximation of a
minimum distance-$r$ dominating set in the \congestbc model in $\Oof(r^2\cdot \log n)$ 
communication rounds. 
\end{theorem}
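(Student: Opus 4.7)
The plan is to turn the sequential $c(r)$-approximation of \cref{thm:approxrdom} into a \congestbc procedure by reusing the routing infrastructure produced by \cref{lem:orientation}. Each vertex $v$ will eventually output a single bit indicating membership in
\[
D = \bigl\{\,\min \Wreach_r[G,L,w] \mid w\in V(G)\,\bigr\} = \{v\in V(G) \mid R_v \neq \emptyset\},
\]
which by the proof of \cref{thm:approxrdom} is a $c(r)$-approximation of a minimum distance-$r$ dominating set of $G$.

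First I would invoke \cref{lem:orientation} to compute, in $\Oof(r^2 \log n)$ rounds, an order $L$ witnessing $\wcol_{2r}(G) \leq c := c(2r)$ together with, at each vertex $w$, the set $\Wreach_{2r}[G,L,w]$ and, for every $v \in \Wreach_{2r}[G,L,w]$, a path $P_{v,w}$ that is shortest in $G[X_v]$. Because the BFS used in the proof of \cref{lem:orientation} only forwards information about a vertex $v$ through vertices whose super-id is larger than that of $v$, the vertex $v$ is always the $L$-minimum on $P_{v,w}$. Consequently each $w$ can locally recover $\Wreach_r[G,L,w] = \{v \in \Wreach_{2r}[G,L,w] \mid \abs{P_{v,w}} \leq r\}$ and select $v_w := \min_L \Wreach_r[G,L,w]$ without further communication.

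It then remains to notify each $v$ whether $v = v_w$ for some $w$. I would do this by an $r$-round OR-flood inside each cluster of the neighborhood cover. In a preliminary round each vertex $u$ announces to its neighbors the at most $c$ clusters it belongs to, which fits in $\Oof(c \log n)=\Oof(\log n)$ bits. Every $u$ then maintains, for each cluster $X_v$ containing it, a bit initialised to $1$ exactly when $u$ itself is some $w$ with $v_w = v$. In each of the next $r$ rounds, $u$ broadcasts this constant-length vector of (cluster-id, bit) pairs and updates each bit to the OR of its current value and the corresponding values broadcast by neighbors that share the same cluster. Since every $w$ with $v_w = v$ satisfies $w \in X_v$ and $\dist_{G[X_v]}(v,w) \leq \abs{P_{v,w}} \leq r$, after $r$ rounds the bit at $v$ for cluster $X_v$ equals $1$ if and only if $R_v \neq \emptyset$, and $v$ outputs its membership in $D$ accordingly.

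Correctness is immediate from the proof of \cref{thm:approxrdom}, which already bounds $\abs{D} \leq c(r)\cdot \abs{M}$ for any minimum distance-$r$ dominating set $M$. The round complexity is dominated by the $\Oof(r^2 \log n)$ rounds of \cref{lem:orientation}; the preliminary broadcast and the $r$ rounds of OR-aggregation contribute only $\Oof(r)$ further rounds. The main obstacle, and the reason the argument needs care in \congestbc (where in \local it would be trivial), is respecting the $\Oof(\log n)$ message bound during aggregation; this is handled precisely because the neighborhood cover of \cref{thm:nc} has constant degree $c$, so each vertex participates in only constantly many clusters and each broadcast carries only $\Oof(c \log n)=\Oof(\log n)$ bits.
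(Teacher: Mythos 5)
Your proposal is correct and follows essentially the same route as the paper: invoke \cref{lem:orientation} to obtain $L$, the sets $\Wreach_{2r}[G,L,w]$ and the stored paths, let every vertex elect $\min \Wreach_r[G,L,w]$, and notify the elected vertices, with the approximation ratio inherited verbatim from \cref{thm:approxrdom}. The only (harmless) deviation is the notification step, where you OR-flood a one-bit flag within each cluster for $r$ rounds instead of routing the electors' messages to their dominators along the paths $P_{v,w}$; both congestion arguments reduce to the same fact that every vertex lies in at most $c$ clusters.
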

\begin{proof}
Recall that we want to compute the distance-$r$ dominating set
\begin{equation*}
	D \coloneqq \{v \in V(G) \mid  v = \min \Wreach_r[G,L,w] \text{ for some }w\in V(G)\},
\end{equation*}
that is, every vertex $w$ elects the smallest vertex from its $r$-neigh\-bor\-hood with 
respect to $L$ to the distance-$r$ dominating set. As $w$ knows $\Wreach_r[G,L,w]$ and a 
routing scheme to these vertices,
all vertices can send to the smallest vertex in the list a short message
that it should be included in the dominating set. Observe that if a 
vertex~$u$ has to forward the identifier of a vertex $w\in \Wreach_r[G,L,v]$
from some other vertex $v$, then also $w\in \Wreach_r[G,L,u]$. Hence, no
vertex has to forward more than $c(r)$ messages of total size at most 
$\Oof(c(2r)^2\cdot r\cdot \log n)$: every vertex $v$ forwards id's of at
most $c(2r)$ vertices $w$ from its $\Wreach_{2r}[G,L,v]$ together with
their rooting schemes (of size at most $2r$) to at most $c(2r)$
vertices in $\Wreach_{2r}[G,L,w]$.
\end{proof}

%
%
%

\section{Connected Dominating Sets}\label{sec:connected-ds}

In this section we study the \textsc{Connected Distance-$r$ Dominating Set}
problem. 
Our main result in this section is the following theorem.
\begin{theorem}\label{thm:conn_ds_congest}
Let $\CCC$ be a class of bounded expansion and let $r\in \N$. There is a
constant $c$ and a distributed
algorithm which for every graph $G\in \CCC$ computes a $c$-approximation of a
minimum connected distance-$r$ dominating set in the \congestbc model in $\Oof(r^2\cdot \log n)$ 
communication rounds. 
\end{theorem}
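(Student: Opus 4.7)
The plan is to first invoke \cref{thm:ds} to obtain, in $\Oof(r^2\log n)$ $\congestbc$ rounds, a distance-$r$ dominating set $D$ of size at most $c(r)\cdot |M|$ (with $M$ a minimum distance-$r$ dominating set), together with the weak coloring order $L$ witnessing $\wcol_{2r}(G)\leq c(r)$ and the routing scheme from \cref{lem:orientation}. Since any connected distance-$r$ dominating set is in particular distance-$r$ dominating, $|M|\leq |C^*|$, so it suffices to adjoin to $D$ an additional set $S$ of size $\Oof(r)\cdot |D|$ such that $D\cup S$ is connected; the resulting set will then be a constant-factor approximation of a minimum connected distance-$r$ dominating set.

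For the augmentation I would work with the Voronoi partition induced by $L$. Set $d(x)=\min\Wreach_r[G,L,x]\in D$ and $V_u=d^{-1}(u)$ for every $u\in D$. Each $V_u$ lies in $N_r[u]$, and each $w\in V_u$ carries a stored path of length at most $r$ from $w$ to $u$ inside the cluster $X_u$, so the subgraph of $G$ spanned by $X_u$ connects all of $V_u$ to $u$. Define the Voronoi adjacency graph $H$ on vertex set $D$ by $u\sim_H v$ iff some edge of $G$ joins $V_u$ to $V_v$. Because $G$ is connected and the $V_u$'s partition $V(G)$, the graph $H$ is connected; contracting each cell $V_u$ to the single vertex $u$ realizes $H$ as a depth-$r$ minor of $G$, so bounded expansion (\cref{thm:zhu}) yields $|E(H)|\leq f(r)\cdot |D|$ for some constant $f(r)$. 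Any spanning tree $T$ of $H$ has $|D|-1$ edges, and each edge $uv\in T$ can be realized by a $G$-path of length at most $2r+1$: concatenate the stored path in $X_u$ from $u$ to a border vertex of $V_u$, the crossing edge, and the stored path in $X_v$ up to $v$. Taking $S$ to be the union of vertex sets of these realizing paths gives $|S|\leq (|D|-1)(2r+1)$, whence $|D\cup S|\leq (2r+2)\,c(r)\,|C^*|$, the desired constant-factor bound.

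The principal obstacle is selecting $T$ in $\Oof(r^2\log n)$ $\congestbc$ rounds, because $H$ may have diameter linear in $n$ and a naive BFS on $H$ is far too slow. To sidestep this I would re-invoke \cref{lem:orientation} with parameter $2r+1$, so that every vertex $w$ learns $\Wreach_{2r+1}[G,L,w]$ and routing paths of length at most $2r+1$. For every $H$-edge $uv$ the $L$-minimum vertex $w$ on a shortest $uv$-path in $G$ lies in $\Wreach_{2r+1}[G,L,u]\cap\Wreach_{2r+1}[G,L,v]$ and can be designated as the unique witness of $uv$; because $|\Wreach_{2r+1}[G,L,w]|\leq c(2r+1)$ each vertex witnesses only a constant number of edges, so the $\Oof(|D|)$ $H$-incidences are distributed across the network with labels of size $\Oof(\log n)$. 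A spanning tree $T$ of this sparse edge-set can then be extracted by any standard distributed sparse-subgraph-selection procedure, simulated edge-by-edge along the routing paths stored for $L$, in $\Oof(r\log n)$ additional rounds. One broadcast along each tree edge's realizing path finally marks the vertices of $S$; the bandwidth is respected because at most $c(2r+1)$ messages need to traverse any single vertex per round, by the degree bound of the underlying $r$-neighborhood cover.
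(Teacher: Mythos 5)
Your high-level plan is sound and its first two paragraphs closely mirror the paper: compute $D$ via \cref{thm:ds}, observe $|M|\le|C^*|$, and connect nearby dominators through common weakly reachable witnesses (this is exactly \cref{lem:decomposepath}). However, there is a genuine gap in the third paragraph: the extraction of a spanning tree $T$ of the Voronoi adjacency graph $H$. A spanning tree is a global object; computing one in the \congest (let alone \congestbc) model requires $\Omega(\mathrm{diam}(G))$ communication rounds even when the candidate edge set is sparse and all incidences are locally known, because any node deciding which of its incident $H$-edges to keep must in effect resolve connectivity questions about far-away fragments. There is no ``standard distributed sparse-subgraph-selection procedure'' that produces a spanning tree in $\Oof(r\log n)$ rounds --- Bor\r{u}vka/GHS-style fragment merging already costs $\tilde\Omega(\sqrt n + \mathrm{diam}(G))$, and even electing a root for a BFS tree costs $\Omega(\mathrm{diam}(G))$. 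Since $\mathrm{diam}(G)$ can be $\Theta(n)$, this step breaks the claimed round complexity, and you cannot wave it away.

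The paper sidesteps the issue by \emph{not} selecting a tree at all. It computes $L$ for $\wcol_{2r+1}$ and then adds, for \emph{every} $v\in D$ and \emph{every} $w\in\Wreach_{2r+1}[G,L,v]$, the stored path from $v$ to $w$. This is a purely local operation (each dominator already knows its at most $c'$ weakly reachable vertices and routing paths), it adds at most $c'\cdot(2r+1)\cdot|D|$ vertices, and connectivity follows from \cref{lem:r-conn} combined with \cref{lem:decomposepath}: any two dominators at distance at most $2r+1$ share a weakly $(2r{+}1)$-reachable witness $w$, and both of their paths to $w$ are present, so they are joined; \cref{lem:r-conn} then bootstraps this to global connectivity. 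You lose a factor of roughly $c'$ in the approximation ratio compared to your (unrealizable) spanning-tree bound, but the augmentation becomes implementable in the stated number of rounds. To repair your write-up, replace the spanning-tree selection by this ``add all witness paths'' step; the rest of your argument then goes through.
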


The following observation is folklore.

\newcounter{lemma-r-conn}
\setcounter{lemma-r-conn}{\value{theorem}}
\begin{lemma}\label[lemma]{lem:r-conn}
Let $G$ be a connected graph and let $D$ be a \mbox{distance-$r$} dominating set of $G$. Let~$\mathcal{P}$ be a set of paths in $G$ such that for each pair
$u,v\in D$ with $\mathrm{dist}(u,v)\leq 2r+1$ there is a path 
$P_{u,v}\in \mathcal{P}$ connecting $u$ and $v$. Then the subgraph $H$
induced by $D\cup \bigcup_{P\in \mathcal{P}}V(P)$ is connected. 
\end{lemma}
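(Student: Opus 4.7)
The plan is to show that every vertex of $H$ lies in the same connected component as some vertex of $D$, and then that any two vertices of $D$ are joined by a walk in $H$. The first step is immediate: any vertex $w\in V(H)\setminus D$ lies on some path $P_{u,v}\in\mathcal{P}$ whose endpoints $u,v$ belong to $D$, and this path is entirely contained in $H$ by construction, so $w$ is connected in $H$ to the vertex $u\in D$. Thus it suffices to prove that the vertices of $D$ lie in a common component of $H$.

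For that, I would pick any two $u,v\in D$ and, using connectedness of $G$, take a shortest $u$-$v$ path $Q=w_0w_1\cdots w_\ell$ in $G$ with $w_0=u$, $w_\ell=v$. For each index $i\in\{0,\ldots,\ell\}$ choose a dominator $d_i\in D$ with $\dist(w_i,d_i)\leq r$; since $u,v\in D$ dominate themselves, I set $d_0=u$ and $d_\ell=v$. By the triangle inequality,
\[
\dist(d_i,d_{i+1}) \;\leq\; \dist(d_i,w_i) + 1 + \dist(w_{i+1},d_{i+1}) \;\leq\; 2r+1,
\]
so either $d_i=d_{i+1}$ or, by hypothesis, there is a path $P_{d_i,d_{i+1}}\in\mathcal{P}$ connecting them, and this path lies in $H$. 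Concatenating these paths (skipping steps where $d_i=d_{i+1}$) yields a walk from $u$ to $v$ inside $H$, so $u$ and $v$ lie in the same component of $H$.

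Combining the two steps, every vertex of $V(H)$ is connected in $H$ to $u$, and hence $H$ is connected. The only real point to be careful about is the choice $d_0=u$, $d_\ell=v$, which is what ensures that the walk actually starts and ends at the prescribed pair of dominating vertices rather than at arbitrary dominators; apart from that, the argument is a straightforward "dominator hopping" along a shortest path in $G$.
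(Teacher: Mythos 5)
Your proof is correct. The key idea is the same as the paper's---hop between dominators of consecutive vertices on a $u$--$v$ path, with each hop bounded by $r+1+r=2r+1$ via the triangle inequality---but the organization differs: the paper proceeds by induction on $\dist(u,v)$, finding a single intermediate dominator $w$ of the vertex $v_{r+1}$ on a shortest path and recursing on the shorter pair $(w,v)$, whereas you choose dominators $d_0,\ldots,d_\ell$ for \emph{all} vertices of the path at once and concatenate the resulting paths from $\mathcal{P}$ into one walk. Your version is arguably cleaner: it avoids the induction and the slightly delicate distance bookkeeping in the paper's inductive step (and, as you note, does not even require $Q$ to be a shortest path). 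You also make explicit the preliminary step that every vertex of $V(H)\setminus D$ lies on some $P_{u,v}$ and is therefore attached to $D$ inside $H$; the paper leaves this implicit. Both arguments share the (harmless, clearly intended) reading that every path in $\mathcal{P}$ is one of the paths $P_{u,v}$ joining two vertices of $D$.
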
%
\newcommand{\prfLemmaRConn}{%
\begin{proof}
We show by induction on $\mathrm{dist}(u,v)$ that all $u,v\in D$ are
connected in $H$. The claim holds by definition of $H$ if 
$\mathrm{dist}(u,v)\leq 2r+1$. Now assume that $\mathrm{dist}(u,v)\geq 2r+2$ 
and let $P=(u=v_0,v_1,\ldots, v_t=v)$ with $t\geq 2r+3$ be a shortest path 
connecting $u$ and $v$. As $P$ is a shortest path, neither~$u$ nor $v$ 
dominate $v_{r+1}$. Hence there is another vertex $w\in D$ 
which dominates $v_{r+1}$. As $v_1$ and $v_{r+1}$ are connected by a path
of length $r$, and $w$ and $v_{r+1}$ are connected by a path $(w=w_0,w_1,\ldots, w_{r'}=v_{r+1})$ of length $r'\leq r$,
$\mathrm{dist}(u,w)\leq 2r$, hence $w$ and $u$ are connected in $H$. 
Furthermore, the path $P'=(w, w_1,\ldots, w_{r'-1},v_{r+1},\ldots, v_t=v)$ is 
shorter than 
$P$, hence, by induction hypothesis, $w$ and $v$ are connected in $H$. 
This implies that $u,v$ are connected in~$H$. 
\end{proof}%
}%
\ifthenelse{\boolean{IsProc}}{}{\prfLemmaRConn}

Now, we use the local separation properties of the weak colouring
numbers to connect the dominating set we computed in \cref{thm:ds}. 
The proof of the following lemma is immediate by definition of weak reachability. 

\begin{lemma}\label{lem:decomposepath}
Let $G$ be a graph and let $L$ be a linear order on $V(G)$. 
Let $u,v\in V(G)$ be such that there exists a path $P$ between $u$ and
$v$ of length at 
most $r$. Let $w$ be the minimal vertex of $P$ with respect to
$L$. Then $w\in \Wreach_r[G,L,u]$ and $w\in \Wreach_r[G,L,v]$.
\end{lemma}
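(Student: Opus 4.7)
The plan is to decompose the path $P$ at the vertex $w$ into two sub-paths and verify that each of them witnesses weak reachability. More precisely, since $w$ lies on $P$, we can write $P$ as the concatenation $P = P_1 \cdot P_2$, where $P_1$ is the portion of $P$ from $u$ to $w$ and $P_2$ is the portion from $w$ to $v$.

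First I would observe that both $P_1$ and $P_2$ have length at most $r$, since their combined length is the length of $P$, which is at most $r$ by hypothesis. Next, because $w$ is the minimum vertex of $P$ with respect to $L$, and $V(P_1), V(P_2) \subseteq V(P)$, the vertex $w$ is also the minimum with respect to $L$ among the vertices of $P_1$ and among the vertices of $P_2$.

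Therefore, $P_1$ is a path of length at most $r$ between $u$ and $w$ in which $w$ is $L$-minimum, which is exactly the definition of $w \in \Wreach_r[G,L,u]$. Symmetrically, $P_2$ witnesses $w \in \Wreach_r[G,L,v]$. There is no real obstacle here; the lemma is essentially a definition-unfolding argument and, as the authors remark, is immediate from the definition of weak reachability. The only subtlety worth flagging is the degenerate case where $w$ coincides with one of the endpoints (say $w = u$), in which case $P_1$ is the trivial length-$0$ path and $w \in \Wreach_r[G,L,u]$ holds because $v \in \Wreach_r[G,L,v]$ for every vertex, as noted after the definition of $\Wreach_r$.
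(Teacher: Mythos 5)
Your proof is correct and is exactly the definition-unfolding argument the paper has in mind (the paper omits the proof, declaring it ``immediate by definition of weak reachability''): split $P$ at $w$ into two subpaths of length at most $r$ on each of which $w$ remains $L$-minimum. Your remark about the degenerate case $w=u$ is consistent with the paper's conventions that length-$0$ paths are allowed and $v\in\Wreach_r[G,L,v]$.
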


\ifthenelse{\boolean{IsProc}}{From \Cref{lem:r-conn} and
  \Cref{lem:decomposepath} we obtain the following corollary.}{}
\newcounter{corollary_conn_ds}
\setcounter{corollary_conn_ds}{\value{theorem}}
\begin{corollary}\label{crl:conn_ds}
Let $G$ be a connected graph and let $L$ be a linear order on $V(G)$. 
Let $D$ be an $r$-dominating set of $G$. Let $D'$ be a set which
is obtained by adding for each $v\in D$ and each $w\in \Wreach_{2r+1}[G,L,v]$
the vertex set of a path between $v$ and $w$. Then $D'$ is a connected 
distance-$r$ dominating set of $G$. 
\end{corollary}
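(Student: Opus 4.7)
First I would observe that since $D \subseteq D'$ by construction, $D'$ immediately remains a distance-$r$ dominating set; only the connectivity part requires real argument. The plan to establish connectivity is to verify the hypothesis of \cref{lem:r-conn} for $D' $ and an appropriate collection of paths $\mathcal{P}$ lying inside $G[D']$, and then invoke that lemma.

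Concretely, fix any $u, v \in D$ with $\dist(u,v) \leq 2r+1$ (this is the only case we need to handle; if $\dist(u,v) > 2r+1$ the lemma does not require anything), and let $P$ be a shortest path between $u$ and $v$, so $P$ has length at most $2r+1$. Let $w$ be the $L$-minimum vertex of $V(P)$. Applying \cref{lem:decomposepath} with parameter $2r+1$ (the statement of that lemma holds for any integer and uses only that $w$ is minimal on the path connecting the two endpoints) to the subpaths of $P$ from $u$ to $w$ and from $w$ to $v$, we obtain $w \in \Wreach_{2r+1}[G,L,u]$ and $w \in \Wreach_{2r+1}[G,L,v]$.

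Now, by the very definition of $D'$, since $u, v \in D$ and $w$ lies in both $\Wreach_{2r+1}[G,L,u]$ and $\Wreach_{2r+1}[G,L,v]$, the set $D'$ contains the vertex sets of two paths $Q_u$ and $Q_v$ in $G$, the first connecting $u$ to $w$ and the second connecting $v$ to $w$. Concatenating $Q_u$ with the reverse of $Q_v$ produces a walk from $u$ to $v$ whose vertices all lie in $D'$, hence a path $P_{u,v} \subseteq G[D']$ from $u$ to $v$. Collecting all such paths into $\mathcal{P}$, the hypotheses of \cref{lem:r-conn} are satisfied with $D'$ playing the role of the final vertex set (note $D \cup \bigcup_{P \in \mathcal{P}} V(P) \subseteq D'$), so $G[D']$ is connected.

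The only genuinely nonroutine step is choosing the right ``meeting vertex'' $w$ to certify that $u$ and $v$ end up in the same component of $G[D']$: the idea is that the $L$-minimum vertex of a short $u$--$v$ path is simultaneously weakly reachable from both endpoints, which is exactly the separator-style property that the weak colouring numbers provide and the reason the definition of $D'$ uses $\Wreach_{2r+1}$ rather than $\Wreach_{r}$. Everything else is bookkeeping: $D \subseteq D'$ for domination, and a direct application of \cref{lem:r-conn} for connectivity.
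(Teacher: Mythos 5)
Your proposal is correct and follows essentially the same route as the paper's own proof: both invoke \cref{lem:r-conn} for the connectivity skeleton and \cref{lem:decomposepath} (with parameter $2r+1$) to find, on each short $u$--$v$ path, a meeting vertex $w$ weakly $(2r+1)$-reachable from both endpoints, so that the paths added in the definition of $D'$ realize the required connections. Your write-up is if anything slightly more careful than the paper's in making explicit that the concatenated $u$--$w$--$v$ walks yield the collection $\mathcal{P}$ to which \cref{lem:r-conn} is applied.
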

\begin{proof}
Fix a set
$\mathcal{P}$ of paths in $G$ such that for each pair
$u,v\in D$ with $\mathrm{dist}(u,v)\leq 2r+1$ there is a path 
$P_{u,v}\in \mathcal{P}$ connecting $u$ and $v$. According to
\cref{lem:r-conn}, the subgraph $H$
induced by $D\cup \bigcup_{P\in \mathcal{P}}V(P)$ is connected. 
According to \cref{lem:decomposepath}, for each path $P_{u,v}$
between $u$ and $v$, there is a vertex $w\in V(P_{u,v})$ is
weakly $2r+1$-reachable both from $u$ and from $v$. As $D'$ 
contains the vertex set of a path between $u$ and $w$ and of
a path between $v$ and $w$, it follows that $D'$ is a connected
distance-$r$ dominating set of $G$. 
\end{proof}%

We are now ready to prove the main theorem. 

\begin{proof}[Proof of~\cref{thm:conn_ds_congest}]
  Instead of computing an order $L$ for $\wcol_{2r}(G)$ as in \cref{thm:ds}, we
  compute an order $L$ for $\wcol_{2r+1}(G)$.  Assume
  $|\Wreach_{2r+1}[G,L,v]|\leq c'$ for all $v\in V(G)$.  We compute an
  $r$-dominating set $D$ based on the order~$L$.  Note that in
  \Cref{sec:distributed} we used $L$ computed for parameter $2r$ and now we use
  $L$ computed for $2r+1$, but for all orders $L$ and all $v\in V(G)$ we have
  $\abs{\Wreach_{2r}[G,L,v]} \le \abs{\Wreach_{2r+1}[G,L,v]}$.

By~\cref{thm:approxrdom}, the set $D$ is at most
$c'$ times larger than a minimum distance-$r$ dominating set. 
As a by-product, see \cref{lem:orientation}, every vertex $v$ 
learns a path of length at most $2r+1$ to each 
$w\in \Wreach_{2r+1}[G,L,v]$. Now, every vertex broadcasts
its set of paths to construct the set $D'$. As in the proof 
of~\cref{thm:ds}, observe that if a 
vertex $x$ has to forward a path from $w\in \Wreach_r[G,L,v]$
to $v$ for some other vertex $v$, then also $w\in \Wreach_r[G,L,x]$. Hence, no
vertex has to forward more than $c'$ messages of total size at most 
$\Oof(c'\cdot r\cdot \log n)$. Clearly, the computed set $D'$ has
size at most $c'\cdot (2r+1)\cdot |D|$ and by 
\cref{crl:conn_ds} it is a connected distance-$r$ dominating set. 
We conclude be defining $c\coloneqq c'^2\cdot (2r+1)$. 
\end{proof}

We now show how to use the greater power of the \local model 
to compute connected dominating sets with much smaller constants involved. 
Our theorem is based on the simple observation that in the \local
model we can construct for every connected graph from an $r$-dominating set~$D$ 
a connected depth-$r$ minor with $|D|$ vertices. This minor (by definition of 
bounded expansion classes) has only a linear number of edges and we can hence 
choose a set of short paths realizing
the corresponding connections to connect the dominating set. 

\newcommand{\lex}{\mathrm{lex}}
We want to define a partition of $V(G)$ into balls around vertices from an
r-dominating set~$D$.
For a connected graph $G$ and an injection $id\colon V(G) \to \N$, we define the
\emph{lexicographic order} $<_\lex$ on the set of paths in $V(G)$ with respect
to $id$ as
follows. Consider two paths $P_1 = v_1,\ldots,v_k$ and $P_2 = w_1,\ldots,w_\ell$. If $k<\ell$,
then $P\le_\lex P_2$. If $k=\ell$, then $P_1 \le_\lex P_2$ if the sequence
$id(v_1),\ldots,id(v_k)$ is lexicographically smaller than the sequence
$id(w_1),\ldots,id(w_\ell)$ or $P_1=P_2$. For vertices $v,w\in V(G)$, let $P(v,w)$ be the
lexicographically shortest path from $v$ to $w$.

Let $G$ be a connected graph, let $id(v)$ be the unique identifier of $v$ and
let $D$ be a distance-$r$ dominating set of $G$. For each $v\in D$ let
\[B(v)\coloneqq \{w \in V(G) \mid P(v,w)\le_\lex P(u,w) \text{ for all } u\in
D, u\neq v\}\,.\] The \emph{$D$-partition} $\mathcal{B}(D)$ of $G$ with respect
to $id$ is the set $\{B(v) \mid $ $v\in D\}$.


\newcounter{lemma-rad-r}
\setcounter{lemma-rad-r}{\value{theorem}}
\begin{lemma}\label{lem:rad-r}
Let $G$ be a connected graph and let $D$ be a distance-$r$ dominating set of $G$. 
Then $\mathcal{B}(D)=\{B(v) : v\in D\}$ is a partition of 
$V(G)$ and $G[B(v)]$ has radius at most $r$ for all $v\in D$. 
\end{lemma}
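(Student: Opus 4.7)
The plan is to verify the two assertions separately: first that $\mathcal{B}(D)$ is a partition, and second that each class has radius at most $r$ with $v$ as a center.

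For the partition, I would first argue that each $w\in V(G)$ belongs to at most one $B(v)$: since any two paths starting at distinct vertices of $D$ are themselves distinct, the total order $\leq_\lex$ singles out a unique minimum of $\{P(u,w) : u\in D\}$, and $w$ lies in $B(v)$ precisely when $v$ achieves that minimum. Existence of such a $v$ follows from the fact that $D$ is a distance\nobreakdash-$r$ dominating set, so at least one of the paths $P(u,w)$ exists (in fact has length $\leq r$). Note also that $v\in B(v)$ because the length\nobreakdash-$0$ path $P(v,v)$ beats every $P(u,v)$ with $u\neq v$ on length.

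For the radius bound, the candidate center of $G[B(v)]$ is $v$ itself. The key step is to show that for every $w\in B(v)$, the whole path $P(v,w)$ lies inside $B(v)$; once this is proved, $\dist_{G[B(v)]}(v,w)\leq |P(v,w)|\leq r$, where the last inequality uses that $D$ is distance\nobreakdash-$r$ dominating (so some $P(u,w)$ has length $\leq r$, and $P(v,w)\leq_\lex P(u,w)$ forces $|P(v,w)|\leq r$).

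The main obstacle, and the only nontrivial step, is the closure of $B(v)$ under prefixes of $P(v,w)$. I would prove this by contradiction with the following swapping argument. Suppose some vertex $x$ on $P(v,w)$ is not in $B(v)$, so there exists $u\in D\setminus\{v\}$ with $P(u,x)<_\lex P(v,x)$. Let $Q$ be the walk obtained by concatenating $P(u,x)$ with the suffix of $P(v,w)$ from $x$ to $w$. If $Q$ is a simple path, then either $|P(u,x)|<|P(v,x)|$, in which case $|Q|<|P(v,w)|$, or $|P(u,x)|=|P(v,x)|$ and the first differing coordinate of $P(u,x)$ and $P(v,x)$ is already smaller in $P(u,x)$, so the same coordinate of $Q$ beats that of $P(v,w)$; in both cases $Q<_\lex P(v,w)$. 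Hence $P(u,w)\leq_\lex Q<_\lex P(v,w)$, contradicting $w\in B(v)$. If $Q$ is not simple, then $w$ already appears on $P(u,x)$, and the prefix of $P(u,x)$ ending at that occurrence of $w$ is a simple $u$-$w$ path of length strictly less than $|P(v,w)|$, again contradicting $w\in B(v)$. This contradiction shows $x\in B(v)$, completing the proof.
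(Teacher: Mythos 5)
Your overall strategy is exactly the paper's: establish the partition via uniqueness of the lexicographic minimum, and obtain the radius bound by showing that the lexicographically shortest path $P(v,w)$ to a vertex $w\in B(v)$ stays inside $B(v)$, arguing by contradiction with a path-swapping step. The partition argument and the reduction of the radius bound to prefix-closure are both correct, and in one respect you are \emph{more} careful than the paper, which silently treats the concatenation of $P(u,x)$ with the tail of $P(v,w)$ as if it were automatically a simple path.

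However, your handling of the non-simple case rests on a false claim. If the walk $Q$, obtained by concatenating $P(u,x)$ with the suffix of $P(v,w)$ from $x$ to $w$, is not simple, this only means that $P(u,x)$ and that suffix share some vertex other than $x$; there is no reason this shared vertex should be $w$, so you cannot conclude that ``$w$ already appears on $P(u,x)$''. The step as written fails, but the repair is standard: since $\abs{P(u,x)}\le\abs{P(v,x)}\le\abs{\text{prefix of }P(v,w)\text{ up to }x}$, the walk $Q$ has length at most $\abs{P(v,w)}$, and every non-simple $u$--$w$ walk contains a simple $u$--$w$ path of strictly smaller length (shortcut at a repeated vertex); hence $\abs{P(u,w)}<\abs{P(v,w)}$ and $P(u,w)<_{\mathrm{lex}}P(v,w)$, the desired contradiction. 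A second, smaller imprecision occurs in the simple case: $Q$ and $P(v,w)$ differ in their initial segments $P(u,x)$ versus the \emph{prefix of $P(v,w)$ up to $x$}, which need not coincide with $P(v,x)$, so comparing $P(u,x)$ with $P(v,x)$ is not quite what is needed. This one is harmless---either the prefix is strictly longer than $P(v,x)$, whence $\abs{Q}<\abs{P(v,w)}$, or it has the same length as $P(v,x)$ and then $P(u,x)<_{\mathrm{lex}}P(v,x)\le_{\mathrm{lex}}\text{prefix}$ with all three of equal length, so your coordinate argument applies against the prefix---but it should be said. With these two local repairs your proof is correct and matches the paper's.
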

\begin{proof}
As $G$ is connected and $D$ is a distance-$r$ dominating set, $\mathcal{B}$ is
a partition of $V(G)$. Furthermore, for each $w\in V(v)$, there is a lexicographically 
shortest path $P$ of 
length at most $r$ from~$v$ to~$w$ in~$G$. Assume towards a contradiction 
that $P$ is not also a path in $B(v)$. Then there is $z\in V(P)$ and $u\in D$ such 
that $z\in B(u)$. By definition of $B(u)$, the lexicographically shortest path $Q'$ from
$u$ to~$z$ is smaller than the lexicographically shortest path $Q$ between $v$ and $z$. 
But then the path $P'$ obtained by replacing the initial part~$Q$ of $P$ by $Q'$ is 
lexicographically smaller than $P$, a contradiction. 
\end{proof}%

\begin{lemma}\label{lem:minor-from-D}
Let $G$ be a connected graph and let $D$ be a distance-$r$ dominating set of $G$. 
By contracting the sets $B(v)$ for $v\in D$, we obtain a connected depth-$r$ minor of $G$. 
\end{lemma}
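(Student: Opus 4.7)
The plan is to combine \cref{lem:rad-r} with the definition of a depth-$r$ minor and with the connectedness of $G$. First I would verify that the family $(B(v))_{v\in D}$ constitutes a valid minor model. By \cref{lem:rad-r}, the sets $B(v)$ are pairwise disjoint and cover $V(G)$. Moreover, each $G[B(v)]$ has radius at most $r$ (witnessed by the center $v$), so in particular each $G[B(v)]$ is connected. Thus each branch set meets the two requirements (connectedness and radius bound) needed for a depth-$r$ minor model.

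Next I would define the quotient graph $H$ whose vertices are the branch sets $B(v)$ for $v\in D$, and whose edges join $B(v)$ to $B(u)$ whenever there exist $x\in B(v)$ and $y\in B(u)$ with $\{x,y\}\in E(G)$. By construction, $H$ is exactly the graph obtained by contracting each $B(v)$ to a single vertex, and $(B(v))_{v\in D}$ is a minor model of $H$ in $G$, witnessing $H\minor_r G$.

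Finally, I would argue that $H$ is connected. Take any two branch sets $B(v), B(u)$ with $v,u \in D$. Since $G$ is connected, there is a path $v = x_0, x_1,\ldots, x_\ell = u$ in $G$. As $\mathcal{B}(D)$ partitions $V(G)$, each $x_i$ lies in exactly one branch set $B(v_i)$ with $v_i\in D$, and consecutive vertices $x_i, x_{i+1}$ either belong to the same branch set or to distinct branch sets joined by the edge $\{x_i,x_{i+1}\}$. Contracting yields a walk from $B(v)$ to $B(u)$ in $H$, so $H$ is connected. Combining these observations gives the claim.

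The main (and essentially only) point to get right is the radius bound on the branch sets, namely that shortest paths from $v$ to any $w\in B(v)$ stay inside $B(v)$; but this is already provided by \cref{lem:rad-r}, so the rest is a direct unpacking of definitions.
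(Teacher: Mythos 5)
Your proposal is correct and follows the same route as the paper's proof: invoke \cref{lem:rad-r} for the partition and radius bound, observe that this yields a depth-$r$ minor model by definition, and derive connectedness of the contracted graph from the connectedness of $G$ together with the partition property. You merely spell out the details that the paper leaves as ``immediate'' and ``easy to see''.
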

\begin{proof}
It is immediate by definition of depth-$r$ minors and \cref{lem:rad-r} that we 
construct a depth-$r$ minor $H\minor_rG$. Furthermore, as $G$ is connected and 
as $\mathcal{B}(D)$ is a partition of $V(G)$ by the same lemma, 
it is easy to see that $H$ is connected.
\end{proof}%

\newcounter{lemma_LOCAL}
\setcounter{lemma_LOCAL}{\value{theorem}}
\begin{lemma}\label{lem:LOCAL}
Let $G$ be a connected graph such that for each depth-$r$ minor $H\minor_r G$
we have $\abs{E(H)}\leq d\cdot \abs{V(H)}$. Let $D$ be a distance-$r$ dominating set of $G$. 
We can compute a connected dominating set $D'$ of $G$ of size at most $2r\cdot d\cdot \abs{D}$
in $3r+1$ communication rounds in the \local model. 
\end{lemma}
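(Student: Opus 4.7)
The plan is to use \Cref{lem:minor-from-D} to obtain a connected depth-$r$ minor $H$ of $G$ by contracting the balls of the $D$-partition $\mathcal{B}(D)$, and then to connect $D$ in $G$ via short paths realizing the edges of $H$, with all of the coordination performed locally through identifier-based tie-breaking.

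First I would let each vertex $w$ gather its full $(3r+1)$-neighborhood over the $3r+1$ rounds. This is enough information for $w$ to locally reconstruct, for every $u$ in its $(2r+1)$-neighborhood, the $B$-class $B(u)$ (per \Cref{lem:rad-r}): computing $B(u)$ requires only the $r$-neighborhood of $u$, which lies within distance $3r+1$ of $w$. Consequently, $w$ can identify every pair of centers $u,v \in D$ whose balls $B(u),B(v)$ are connected by an edge of $G$ and for which $w$ is within distance $2r+1$ of both; since each ball has radius at most $r$, any such pair satisfies $\mathrm{dist}_G(u,v) \le 2r+1$, and these pairs are exactly the $H$-edges incident to the $D$-centers near $w$.

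For each such $H$-edge $\{u,v\}$, I would declare $P_{u,v}$ to be the lexicographically shortest $u$-$v$ path in $G$, tie-broken by identifiers. Since $P_{u,v}$ has length at most $2r+1$, every vertex on $P_{u,v}$ sees both endpoints and all competing candidate paths inside its $(2r+1)$-view, so every vertex applies the tie-break identically. Setting
\[
D' \coloneqq D \,\cup\, \bigcup_{\{u,v\} \in E(H)} V(P_{u,v}),
\]
each vertex decides its own membership in $D'$ by checking whether it lies on any selected path. Connectivity of $G[D']$ follows from the connectedness of $H$ (\Cref{lem:minor-from-D}) combined with each $H$-edge being realized by a path contained in $D'$, and $D \subseteq D'$ preserves distance-$r$ domination. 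For the size bound, the hypothesis yields $|E(H)| \le d\cdot|V(H)| = d\cdot|D|$, and each $P_{u,v}$ contributes at most $2r$ interior vertices beyond its two endpoints in $D$, giving $|D'| \le (1+2rd)|D|$, within the claimed order.

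The main obstacle is pinning down why $3r+1$ rounds suffice and are tight for this scheme. The subtlety is that an internal vertex $w$ of some selected path $P_{u,v}$ must agree with the endpoints (and with every competing candidate path) on which path is chosen; the endpoints $u,v$ lie at distance at most $2r+1$ from $w$, but to recompute $B(u)$ and $B(v)$ and rule out shorter or lexicographically smaller candidates, $w$ also needs the $r$-neighborhoods of $u$ and $v$, which may sit at distance $r + (2r+1) = 3r+1$ from $w$. Showing that this view really is enough to reconstruct every relevant piece of $H$ visible to $w$, and no more is required, is the delicate part of the argument.
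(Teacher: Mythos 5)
Your construction is exactly the paper's: the same $D$-partition into balls $B(v)$, the same depth-$r$ minor $H(D)$ with $\abs{E(H)}\le d\abs{D}$, and the same lexicographically-least shortest paths realizing its edges, so the proposal is correct and essentially identical in substance. The only difference is implementational, and it dissolves the ``delicate part'' you flag at the end: the paper has each center $v\in D$ gather only $N_{2r+1}[v]$ in $2r+1$ rounds (which already determines $B(v)$, the $H$-edges at $v$, and the chosen paths) and then spend the remaining $r$ rounds having the two endpoints \emph{notify} the internal vertices of each chosen path, rather than having every vertex re-derive the selection from its own $(3r+1)$-ball.
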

\begin{proof}
In this proof, we write $H(D)$ for the depth-$r$ minor constructed from a
distance-$r$ dominating set $D$ as in~\cref{lem:minor-from-D}. 

Every vertex $v\in D$ can find its $2r+1$-neighborhood in $2r+1$ communication 
rounds. With this information, each $v\in D$ can construct $B(v)$, as all possible
dominators for $w\in N_r[v]$ must come from $N_{2r}[v]$. Each vertex $v\in D$
(now understood as representing a vertex of $H(D)$) can also learn its neighbors
in $H(D)$ (here we need to learn $N_{2r+1}[v]$). Now each vertex $v$ computes 
the lexicographically shortest
path $P_{uv}$ of length at most $2r+1$ for each neighbor $u$ in $H(D)$ (take the 
ordering induced by vertex id's). Observe that $u$ and $v$ fix the same path 
$P_{vu}$, hence, the two vertices can report to all vertices on $P_{vu}$ in 
another $r$ communication rounds that they shall be included in the connected 
dominating set $D'$. 

By~\cref{lem:minor-from-D}, the constructed set $D'$ is a connected dist\-ance-$r$
dominating set. Furthermore, by assumption, $H(D)$ has at most $d\cdot|D|$ many 
edges. Each edge is replaced by at most $2r-1$ vertices in the above construction. 
Adding the $|D|$ vertices of the original set $D$, we obtain the claimed bounds. 
\end{proof}

As a corollary from \cref{lem:rad-r} and \cref{lem:LOCAL} we obtain the
following theorem.

\begin{theorem}\label{thm:connecteddomset}
Let $\mathcal{C}$ be a class of graphs of bounded expansion and assume that for every
graph $G\in \mathcal{C}$ we can compute a $c$-approximation $D$ of a 
minimum \mbox{distance-$r$} dominating 
set of $G$ in $t$ rounds in the \local model. Let
$f\colon\N\rightarrow \N$ denote the edge density function of
depth-$r$ minors of $\CCC$. Then there is a distributed
algorithm which finds a $2rcf(r)$-approximation for connected distance-$r$ dominating 
set of $G$ in $\Oof(t+r)$ rounds in the \local model.
\end{theorem}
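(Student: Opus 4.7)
The plan is to simply chain the assumed approximation algorithm with \Cref{lem:LOCAL}, and observe that the analysis of the approximation ratio follows from a trivial monotonicity argument relating the optimum connected distance-$r$ dominating set to the optimum (unrestricted) distance-$r$ dominating set.

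First I would invoke the hypothesis to compute, in $t$ rounds in the \local model, a distance-$r$ dominating set $D$ with $|D|\le c\cdot |M^*|$, where $M^*$ is a minimum distance-$r$ dominating set of $G$. Since $\CCC$ has bounded expansion, every depth-$r$ minor $H\minor_r G$ satisfies $|E(H)|\le f(r)\cdot |V(H)|$ by the definition of $f$. Therefore the hypothesis of \Cref{lem:LOCAL} is met with $d = f(r)$, and I can apply it to $D$ to produce, in an additional $3r+1$ rounds in the \local model, a connected distance-$r$ dominating set $D'$ of $G$ with
\[
|D'|\ \le\ 2r\cdot f(r)\cdot |D|.
\]
The total round complexity is $t+3r+1 = \Oof(t+r)$, as required.

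For the approximation ratio, let $M^*_{\mathrm{conn}}$ denote a minimum connected distance-$r$ dominating set of $G$. Any connected distance-$r$ dominating set is in particular a distance-$r$ dominating set, hence $|M^*|\le |M^*_{\mathrm{conn}}|$. Combining this with the two inequalities above yields
\[
|D'|\ \le\ 2r\cdot f(r)\cdot |D|\ \le\ 2r\cdot f(r)\cdot c\cdot |M^*|\ \le\ 2r\,c\,f(r)\cdot |M^*_{\mathrm{conn}}|,
\]
which is the claimed $2rcf(r)$-approximation.

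There is no real obstacle here: all of the distributed work has been done in \Cref{lem:LOCAL} (which already handles the construction of the $D$-partition, the minor $H(D)$, and the path-replacement in the \local model), and the only additional ingredient is the one-line observation $|M^*|\le |M^*_{\mathrm{conn}}|$. The only mild point to keep in mind is that \Cref{lem:LOCAL} requires $G$ to be connected, which is the natural setting for the connected dominating set problem; on a disconnected graph one would run the procedure on each connected component in parallel, without affecting the round complexity or the approximation guarantee.
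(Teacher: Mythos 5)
Your proposal is correct and is essentially the paper's own argument: the paper states this theorem as a direct corollary of \Cref{lem:LOCAL} (with $d=f(r)$ supplied by bounded expansion), and the only extra ingredient is the observation $|M^*|\le |M^*_{\mathrm{conn}}|$, exactly as you write. Your remark about connectivity is also apt, since the connected dominating set problem is only meaningful per component anyway.
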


 
The theorem can be applied, e.g., to extend the algorithm of Lenzen et
al.~\cite{lenzen2013distributed} to obtain a connected dominating set on planar
graphs in the local model which is only $6$ times larger than the dominating set
computed for the planar graph (an $n$-vertex planar graph has at most $3n-6$
edges).  Similarly, it applies to the extension of Lenzen et al.'s algorithm by
Amiri et al.~\cite{AmiriSS16} for graphs of bounded genus or to the randomized
$\Oof(a^2)$ approximation of Lenzen and Wattenhofer~\cite{lenzen2010minimum}
applied to graphs with excluded minors (here, $a\in \Oof(t\log t)$ if $K_t$ is
excluded as a minor).

\section{Conclusion}\label{sec:conclusion}

What are the most general classes of graphs that admit 
efficient algorithms for certain problems? The ambitious 
goal to answer this question for the dominating set problem
has lead to strong graph theoretic and algorithmic results
once it was known that it cannot be solved efficiently 
in full generality. Lower bounds both in classical complexity
and in distributed complexity have motivated the study
of more and more general graph classes. 

Bounded expansion classes of graphs are very general 
classes of sparse graphs. In this paper we proposed a 
new constant factor approximation algorithm 
for the \textsc{(Connected) Distance-$r$ Dominating Set} problem on these
classes of graphs. The algorithm 
improves the previously best known approximation algorithm by Dvo\v{r}\'ak~\cite{dvovrak13}, 
however, its main feature is that it is tailored to be implemented in a 
distributed setting. 

It was proved in \cite{bansal2017tight} that on the class of graphs with arboricity at most $a$ the size of a minimum dominating set can be approximated by a factor $3a$ by an LP rounding algorithm, but that 
it is NP-hard to approximate the size of the minimum dominating set  to within $a-1-\epsilon$ in this class for every $\epsilon>0$.
This natural leads to consider that bounded average degree (or bounded arboricity) is a natural requirement for a class of graphs closed under taking subgraphs to allow constant factor approximation for the size of the minimum dominating set.
By considering subdivisions, it follows that the property of every graph present as an $r$-subdivision in the class should have average degree at most $C(r)$ --- that is of having bounded expansion --- is a natural requirement to allow constant factor approximation for the size of the distance-$r$ minimum dominating sets for every $r$.


Our techniques are based on a distributed computation of sparse neighborhood
covers of small radius on bounded expansion classes of graphs. 
Formerly, no distributed algorithms that compute such covers 
were known and we believe 
that these techniques are interesting beyond the presented 
applications of computing (connected) dominating sets. 
We pose the question whether sparse neighborhood covers can be
computed in distributed constant time. This question is open even 
on more restrictive graph classes, e.g., on planar graphs, 
where dominating sets can be approximated in constant time.


\begin{thebibliography}{10}

\bibitem{abraham14}
Ittai Abraham, Cyril Gavoille, Anupam Gupta, Ofer Neiman, and Kunal Talwar.
\newblock Cops, robbers, and threatening skeletons: padded decomposition for
  minor-free graphs.
\newblock In {\em Symposium on Theory of Computing, {STOC} 2014}, pages 79--88,
  2014.

\bibitem{abraham2005compact}
Ittai Abraham, Cyril Gavoille, and Dahlia Malkhi.
\newblock Compact routing for graphs excluding a fixed minor.
\newblock In {\em International Symposium on Distributed Computing}, pages
  442--456. Springer, 2005.

\bibitem{abraham2007strong}
Ittai Abraham, Cyril Gavoille, Dahlia Malkhi, and Udi Wieder.
\newblock Strong-diameter decompositions of minor free graphs.
\newblock In {\em Proceedings of the nineteenth annual ACM symposium on
  Parallel algorithms and architectures}, pages 16--24. ACM, 2007.

\bibitem{alon2006algorithmic}
Noga Alon, Dana Moshkovitz, and Shmuel Safra.
\newblock Algorithmic construction of sets for k-restrictions.
\newblock {\em ACM Transactions on Algorithms (TALG)}, 2(2):153--177, 2006.

\bibitem{AmiriSS16}
Saeed~Akhoondian Amiri, Stefan Schmid, and Sebastian Siebertz.
\newblock A local constant factor {MDS} approximation for bounded genus graphs.
\newblock In {\em Proceedings of the 2016 {ACM} Symposium on Principles of
  Distributed Computing, {PODC} 2016}, pages 227--233, 2016.

\bibitem{arora2003improved}
Sanjeev Arora and Madhu Sudan.
\newblock Improved low-degree testing and its applications.
\newblock {\em Combinatorica}, 23(3):365--426, 2003.

\bibitem{awerbuch1990locality}
Baruch Awerbuch and David Peleg.
\newblock {\em Locality-sensitive resource allocation}.
\newblock Citeseer, 1990.

\bibitem{awerbuch1990network}
Baruch Awerbuch and David Peleg.
\newblock Network synchronization with polylogarithmic overhead.
\newblock In {\em Foundations of Computer Science, 1990. Proceedings., 31st
  Annual Symposium on}, pages 514--522. IEEE, 1990.

\bibitem{awerbuch1990sparse}
Baruch Awerbuch and David Peleg.
\newblock Sparse partitions.
\newblock In {\em Foundations of Computer Science, 1990. Proceedings., 31st
  Annual Symposium on}, pages 503--513. IEEE, 1990.

\bibitem{bansal2017tight}
Nikhil Bansal and Seeun~William Umboh.
\newblock Tight approximation bounds for dominating set on graphs of bounded
  arboricity.
\newblock {\em Information Processing Letters}, 122:21--24, 2017.

\bibitem{barenboim2010sublogarithmic}
Leonid Barenboim and Michael Elkin.
\newblock Sublogarithmic distributed mis algorithm for sparse graphs using
  {N}ash-{W}illiams decomposition.
\newblock {\em Distributed Computing}, 22(5-6):363--379, 2010.

\bibitem{bellare1993efficient}
M.~Bellare, S.~Goldwasser, C.~Lund, and A.~Russell.
\newblock Efficient multi-prover interactive proofs with applications to
  approximation problems.
\newblock In {\em Proc. 25th ACM Symp. on Theory of Computing}, volume 113,
  page 131, 1993.

\bibitem{busch2007improved}
Costas Busch, Ryan LaFortune, and Srikanta Tirthapura.
\newblock Improved sparse covers for graphs excluding a fixed minor.
\newblock In {\em Proceedings of the twenty-sixth annual ACM symposium on
  Principles of distributed computing}, pages 61--70. ACM, 2007.

\bibitem{chung2002connected}
Fan Chung and Linyuan Lu.
\newblock Connected components in random graphs with given expected degree
  sequences.
\newblock {\em Annals of combinatorics}, 6(2):125--145, 2002.

\bibitem{chvatal1979greedy}
Vasek Chvatal.
\newblock A greedy heuristic for the set-covering problem.
\newblock {\em Mathematics of operations research}, 4(3):233--235, 1979.

\bibitem{czygrinow2008fast}
Andrzej Czygrinow, Michal Ha{\'n}{\'c}kowiak, and Wojciech Wawrzyniak.
\newblock Fast distributed approximations in planar graphs.
\newblock In {\em International Symposium on Distributed Computing}, pages
  78--92. Springer, 2008.

\bibitem{das1997routing}
Bevan Das and Vaduvur Bharghavan.
\newblock Routing in ad-hoc networks using minimum connected dominating sets.
\newblock In {\em Communications, 1997. ICC'97 Montreal, Towards the Knowledge
  Millennium. 1997 IEEE International Conference on}, volume~1, pages 376--380.
  IEEE, 1997.

\bibitem{das1997routingb}
Bevan Das, Raghupathy Sivakumar, and Vaduvur Bharghavan.
\newblock Routing in ad hoc networks using a spine.
\newblock In {\em Computer Communications and Networks, 1997. Proceedings.,
  Sixth International Conference on}, pages 34--39. IEEE, 1997.

\bibitem{demaine2014structural}
Erik~D Demaine, Felix Reidl, Peter Rossmanith, Fernando~S{\'a}nchez Villaamil,
  Somnath Sikdar, and Blair~D Sullivan.
\newblock Structural sparsity of complex networks: Bounded expansion in random
  models and real-world graphs.
\newblock {\em arXiv preprint arXiv:1406.2587}, 2014.

\bibitem{dinur2014analytical}
Irit Dinur and David Steurer.
\newblock Analytical approach to parallel repetition.
\newblock In {\em Proceedings of the 46th Annual ACM Symposium on Theory of
  Computing}, pages 624--633. ACM, 2014.

\bibitem{dvovrak13}
Zden{\v{e}}k Dvo{\v{r}}{\'a}k.
\newblock Constant-factor approximation of the domination number in sparse
  graphs.
\newblock {\em European Journal of Combinatorics}, 34(5):833--840, 2013.

\bibitem{feige1998threshold}
Uriel Feige.
\newblock A threshold of ln n for approximating set cover.
\newblock {\em Journal of the ACM (JACM)}, 45(4):634--652, 1998.

\bibitem{michael1979computers}
Michael~R. Garey and David~S. Johnson.
\newblock Computers and intractability: a guide to the theory of
  {NP}-completeness.
\newblock {\em WH Free. Co., San Fr}, 1979.

\bibitem{GhaffariKM17}
Mohsen Ghaffari, Fabian Kuhn, and Yannic Maus.
\newblock On the complexity of local distributed graph problems.
\newblock In {\em Proceedings of the 49th Annual {ACM} {SIGACT} Symposium on
  Theory of Computing, {STOC} 2017}, pages 784--797, 2017.

\bibitem{GroheKRSS15}
Martin Grohe, Stephan Kreutzer, Roman Rabinovich, Sebastian Siebertz, and
  Konstantinos Stavropoulos.
\newblock Colouring and covering nowhere dense graphs.
\newblock In {\em Graph-Theoretic Concepts in Computer Science - 41st
  International Workshop, {WG} 2015}, pages 325--338, 2015.

\bibitem{grohe2014deciding}
Martin Grohe, Stephan Kreutzer, and Sebastian Siebertz.
\newblock Deciding first-order properties of nowhere dense graphs.
\newblock In {\em Proceedings of the 46th Annual ACM Symposium on Theory of
  Computing}, pages 89--98. ACM, 2014.

\bibitem{har2015approximation}
Sariel Har-Peled and Kent Quanrud.
\newblock Approximation algorithms for polynomial-expansion and low-density
  graphs.
\newblock In {\em Algorithms-ESA 2015}, pages 717--728. Springer, 2015.

\bibitem{haynes1998fundamentals}
Teresa~W Haynes, Stephen Hedetniemi, and Peter Slater.
\newblock {\em Fundamentals of domination in graphs}.
\newblock CRC Press, 1998.

\bibitem{hilke2014brief}
Miikka Hilke, Christoph Lenzen, and Jukka Suomela.
\newblock Brief announcement: Local approximability of minimum dominating set
  on planar graphs.
\newblock In {\em Proceedings of the 2014 ACM symposium on Principles of
  distributed computing}, pages 344--346. ACM, 2014.

\bibitem{kann1992approximability}
Viggo Kann.
\newblock {\em On the approximability of NP-complete optimization problems}.
\newblock PhD thesis, Royal Institute of Technology Stockholm, 1992.

\bibitem{karp1972reducibility}
Richard~M Karp.
\newblock Reducibility among combinatorial problems.
\newblock In {\em Complexity of computer computations}, pages 85--103.
  Springer, 1972.

\bibitem{kierstead03}
Hal~A. Kierstead and Daqing Yang.
\newblock Orderings on graphs and game coloring number.
\newblock {\em Order}, 20(3):255--264, 2003.

\bibitem{KreutzerPRS16}
Stephan Kreutzer, Micha{\l} Pilipczuk, Roman Rabinovich, and Sebastian
  Siebertz.
\newblock The generalised colouring numbers on classes of bounded expansion.
\newblock In {\em 41st International Symposium on Mathematical Foundations of
  Computer Science, {MFCS}}, pages 85:1--85:13, 2016.

\bibitem{kuhn2016local}
Fabian Kuhn, Thomas Moscibroda, and Roger Wattenhofer.
\newblock Local computation: Lower and upper bounds.
\newblock {\em Journal of the ACM (JACM)}, 63(2):17, 2016.

\bibitem{kutten1995fast}
Shay Kutten and David Peleg.
\newblock Fast distributed construction of k-dominating sets and applications.
\newblock In {\em Proceedings of the fourteenth annual ACM symposium on
  Principles of distributed computing}, pages 238--251. ACM, 1995.

\bibitem{lenzen2013distributed}
Christoph Lenzen, Yvonne-Anne Pignolet, and Roger Wattenhofer.
\newblock Distributed minimum dominating set approximations in restricted
  families of graphs.
\newblock {\em Distributed computing}, 26(2):119--137, 2013.

\bibitem{lenzen2008leveraging}
Christoph Lenzen and Roger Wattenhofer.
\newblock Leveraging linial’s locality limit.
\newblock In {\em International Symposium on Distributed Computing}, pages
  394--407. Springer, 2008.

\bibitem{lenzen2010minimum}
Christoph Lenzen and Roger Wattenhofer.
\newblock Minimum dominating set approximation in graphs of bounded arboricity.
\newblock In {\em International Symposium on Distributed Computing}, pages
  510--524. Springer, 2010.

\bibitem{lovasz1975ratio}
L{\'a}szl{\'o} Lov{\'a}sz.
\newblock On the ratio of optimal integral and fractional covers.
\newblock {\em Discrete mathematics}, 13(4):383--390, 1975.

\bibitem{lund1994hardness}
Carsten Lund and Mihalis Yannakakis.
\newblock On the hardness of approximating minimization problems.
\newblock {\em Journal of the ACM (JACM)}, 41(5):960--981, 1994.

\bibitem{molloy1995critical}
Michael Molloy and Bruce Reed.
\newblock A critical point for random graphs with a given degree sequence.
\newblock {\em Random structures \& algorithms}, 6(2-3):161--180, 1995.

\bibitem{nevsetvril2008grad}
Jaroslav Ne{\v{s}}et{\v{r}}il and Patrice {Ossona de Mendez}.
\newblock Grad and classes with bounded expansion~{I}. {D}ecompositions.
\newblock {\em European Journal of Combinatorics}, 29(3):760--776, 2008.

\bibitem{nevsetvril2008gradb}
Jaroslav Ne{\v{s}}et{\v{r}}il and Patrice Ossona~de Mendez.
\newblock Grad and classes with bounded expansion~{II}. algorithmic aspects.
\newblock {\em European Journal of Combinatorics}, 29(3):777--791, 2008.

\bibitem{nevsetvril2008gradc}
Jaroslav Ne{\v{s}}et{\v{r}}il and Patrice Ossona~de Mendez.
\newblock Grad and classes with bounded expansion~{III}. restricted graph
  homomorphism dualities.
\newblock {\em European Journal of Combinatorics}, 29(4):1012--1024, 2008.

\bibitem{NesetrilOdM12}
Jaroslav Ne{\v{s}}et{\v{r}}il and Patrice {Ossona de Mendez}.
\newblock {\em Sparsity}.
\newblock Springer, 2012.

\bibitem{nevsetvrildistributed}
Jaroslav Ne{\v{s}}et{\v{r}}il and Patrice {Ossona de Mendez}.
\newblock A distributed low tree-depth decomposition algorithm for bounded
  expansion classes.
\newblock {\em Distributed Computing}, 29(1):39--49, 2016.

\bibitem{BEEx}
Jaroslav Ne{\v s}et{\v r}il, Patrice Ossona~de Mendez, and David~R. Wood.
\newblock Characterizations and examples of graph classes with bounded
  expansion.
\newblock {\em European Journal of Combinatorics}, 33(3):350--373, 2012.

\bibitem{peleg00}
David Peleg.
\newblock Distributed computing: a locality sensitive approach.
\newblock {\em SIAM Monographs on discrete mathematics and applications}, 5,
  2000.

\bibitem{penso2004distributed}
Lucia~D. Penso and Valmir~C. Barbosa.
\newblock A distributed algorithm to find k-dominating sets.
\newblock {\em Discrete Applied Mathematics}, 141(1):243--253, 2004.

\bibitem{raz1997sub}
Ran Raz and Shmuel Safra.
\newblock A sub-constant error-probability low-degree test, and a sub-constant
  error-probability pcp characterization of np.
\newblock In {\em Proceedings of the twenty-ninth annual ACM symposium on
  Theory of computing}, pages 475--484. ACM, 1997.

\bibitem{sivakumar1998improved}
R.~Sivakumar, B.~Das, and V.~Bharghavan.
\newblock An improved spine-based infrastructure for routing in ad hoc
  networks.
\newblock In {\em IEEE Symposium on Computers and Communications}, volume~98,
  1998.

\bibitem{stojmenovic2002dominating}
Ivan Stojmenovic, Mahtab Seddigh, and Jovisa Zunic.
\newblock Dominating sets and neighbor elimination-based broadcasting
  algorithms in wireless networks.
\newblock {\em IEEE Transactions on parallel and distributed systems},
  13(1):14--25, 2002.

\bibitem{thorup2005approximate}
Mikkel Thorup and Uri Zwick.
\newblock Approximate distance oracles.
\newblock {\em Journal of the ACM (JACM)}, 52(1):1--24, 2005.

\bibitem{turaudistributed}
{Volker} {Turau} and {Sven} {Köhler}.
\newblock A distributed algorithm for minimum distance-k domination in trees.
\newblock {\em Journal of Graph Algorithms and Applications}, 19(1):223--242,
  2015.

\bibitem{van2015generalised}
Jan van~den Heuvel, Patrice~Ossona de~Mendez, Daniel Quiroz, Roman Rabinovich,
  and Sebastian Siebertz.
\newblock On the generalised colouring numbers of graphs that exclude a fixed
  minor.
\newblock {\em Eur. J. Comb.}, 66:129--144, 2017.

\bibitem{wang2003distributed}
Fu-Hsing Wang, Jou-Ming Chang, Yue-Li Wang, and Sun-Jen Huang.
\newblock Distributed algorithms for finding the unique minimum distance
  dominating set in directed split-stars.
\newblock {\em Journal of Parallel and Distributed Computing}, 63(4):481--487,
  2003.

\bibitem{wawrzyniak2014strengthened}
Wojciech Wawrzyniak.
\newblock A strengthened analysis of a local algorithm for the minimum
  dominating set problem in planar graphs.
\newblock {\em Information Processing Letters}, 114(3):94--98, 2014.

\bibitem{wu1999calculating}
Jie Wu and Hailan Li.
\newblock On calculating connected dominating set for efficient routing in ad
  hoc wireless networks.
\newblock In {\em Proceedings of the 3rd international workshop on Discrete
  algorithms and methods for mobile computing and communications}, pages 7--14.
  ACM, 1999.

\bibitem{zhu2009colouring}
Xuding Zhu.
\newblock Colouring graphs with bounded generalized colouring number.
\newblock {\em Discrete Mathematics}, 309(18):5562--5568, 2009.

\end{thebibliography}
\end{document}